\documentclass{article} 
\usepackage{times}
\newif\ifarxiv
\newif\ifcameraready

\camerareadyfalse
\arxivtrue

\ifcameraready
    \usepackage{iclr2027_conference}
    \iclrfinalcopy
\else
    \ifarxiv
        \usepackage{iclr2027_arxiv}
        \iclrfinalcopy
    \else
        \usepackage{iclr2027_conference}
    \fi
\fi

\usepackage{amsmath,amsfonts,bm}

\def\eqref#1{equation~\ref{#1}}

\def\1{\bm{1}}

\DeclareMathAlphabet{\mathsfit}{\encodingdefault}{\sfdefault}{m}{sl}
\SetMathAlphabet{\mathsfit}{bold}{\encodingdefault}{\sfdefault}{bx}{n}

\def\gT{{\mathcal{T}}}

\def\sZ{{\mathbb{Z}}}

\usepackage{hyperref}
\usepackage{url}

\usepackage{amsthm}
\usepackage{algorithm}
\usepackage{algpseudocode}
\usepackage{graphicx}
\usepackage{booktabs}
\usepackage{multirow}
\usepackage[table]{xcolor}
\usepackage{subcaption}
\usepackage{placeins}

\theoremstyle{plain}
\newtheorem{theorem}{Theorem}
\newtheorem{lemma}[theorem]{Lemma}

\newtheorem{corollary}[theorem]{Corollary}

\theoremstyle{definition}
\newtheorem{problem}[theorem]{Problem}

\newcommand{\CCG}{\cellcolor{gray!15}}

\newcommand{\Continue}{\State \textbf{continue}} 

\newcommand{\emptystr}{\lambda}
\newcommand{\ctree}[1]{CT({#1})}
\newcommand{\csuftree}[1]{CST({#1})}
\newcommand{\pdrep}[1]{PD({#1})}
\newcommand{\occ}[2]{\operatorname{occ}({#2},{#1})}
\newcommand{\suffixcol}[1]{\operatorname{psuf}({#1})}
\newcommand{\posstart}[1]{\operatorname{start}({#1})}
\newcommand{\posend}[1]{\operatorname{end}({#1})}
\newcommand{\leafnum}[1]{\operatorname{leaf}({#1})}
\newcommand{\indexof}[1]{\operatorname{idx}({#1})}
\newcommand{\nodestr}[1]{\operatorname{nstr}({#1})}
\newcommand{\realstr}[2]{\operatorname{tstr}({#1})}

\title{CT-Miner: Fast and Coarse-Grained Time-Series Pattern Mining via Cartesian Trees}

\author{Hyundong Jin, Hyunki Hong, Yo-Sub Han\\
Department of Computer Science\\
Yonsei University, Seoul, Republic of Korea \\
\texttt{\{tuzi04,hhong,emmous\}@yonsei.ac.kr}
}

\begin{document}

\maketitle

\begin{abstract}
Time series often contain recurring structural patterns, 
and efficiently mining such patterns into compact representations is essential 
for scalable analysis of long sequences. 
Cartesian tree~(CT) equivalence provides a well-established structural abstraction 
that preserves hierarchical order structure 
while discarding exact values and fine-grained ordinal variations. 
By grouping multiple ordinal patterns into a shared structural form, 
CT equivalence offers a principled way to compress recurring temporal structure.
However, mining frequent CT-equivalent patterns at scale 
remains computationally expensive. 
A naive pairwise approach repeatedly constructs 
and counts CT representations over subsequences, 
requiring $O(n^4)$ time for a sequence of length $n$, 
which severely limits its applicability to long sequences.
We propose a new Cartesian pattern mining algorithm 
based on a Cartesian suffix tree 
that compactly organizes CT-equivalent subsequences 
and reuses shared structural information. 
Our method reduces exhaustive CT-pattern occurrence collection
from $O(n^4)$ to $O(n^2)$ time,
and we formally prove the correctness and complexity bounds.
We further show that this computational gain translates into 
effective compact representations. 
Across diverse time-series datasets, 
a small set of mined CT patterns preserves meaningful clustering structure, 
and comparisons with finer-grained order-preserving representations show that 
CT equivalence reduces redundant ordinal distinctions under limited feature budgets.
Our implementation is available at \url{https://github.com/hyundong98/CT-Miner}.
\end{abstract}

\section{Introduction}
Time-series data arise across a wide range of real-world systems,
making effective representation important for downstream 
analysis~\citep{YueWDYHTX2022, WuHLZWL2023, 
GoswamiSCCLD2024, WangLSYMLJCJ2025, TriratSKNNBKKL2026}.
Computational efficiency and scalability are also important considerations
in modern time-series learning, particularly as sequence lengths and model
scales increase~\citep{FranceschiDJ2019,DempsterSW2021,NieNSK2023,ShiWNLYWJ2025}.
One approach is to summarize long sequences 
through recurring patterns rather than relying directly
on exact numerical values~\citep{ChiuKL2003, ZhouWT2026}.
However, the usefulness of such pattern-based representations 
critically depends on the level of abstraction~\citep{LinKWL2007}. 
Treating small numerical or ordinal variations 
as distinct patterns can lead to highly fragmented 
and redundant representations.

Order-preserving~(OP) patterns address numerical variability 
by representing subsequences through their relative order 
rather than their exact values~\citep{OPP,OPST,EFO,OPF,SOPP}. 
However, OP equivalence preserves the complete pairwise ordering 
among elements, which can be unnecessarily fine-grained 
when the goal is to obtain a compact representation 
of recurring temporal structure. 
Closely related ordinal variations may 
therefore be assigned to different patterns 
even when they share the same higher-level structure. 
This motivates a coarser equivalence relation 
that discards less essential ordering information 
while retaining meaningful structural relationships.

Cartesian tree~(CT) equivalence, 
which has been well studied in theoretical string matching, 
provides such a coarser structural abstraction~\citep{Vuillemin80, ParkALP19, ParkBALP2020, KimH2025}.
Rather than preserving all pairwise order relationships, 
CT equivalence captures the hierarchical ordering structure 
induced by range minima.
Consequently, multiple distinct OP patterns can be represented 
by the same CT pattern, allowing recurring structures 
to be consolidated into a more compact pattern space.
Despite this appealing property,
efficiently extracting frequent CT patterns
from long time series remains challenging.
For a pattern length $L$, direct pairwise counting requires
$O(n^2L)$ time, reaching $O(n^3)$ when $L=\Theta(n)$.
Repeating this computation across all possible lengths
leads to $O(n^4)$ time.
Hash-based aggregation reduces the expected cumulative cost
to $O(nL^2)$, or $O(n^3)$ when $L=\Theta(n)$,
but hash collisions can recover the $O(n^4)$ worst-case bound.
This computational bottleneck limits 
the practical use of CT-based representations at scale.

In this work, we bridge this gap 
with an efficient Cartesian pattern mining algorithm 
based on a Cartesian Suffix Tree. 
Our method exploits shared Cartesian structures across subsequences
by constructing a Cartesian suffix tree in $O(n\log n)$ time.
Once constructed, occurrence information for a fixed pattern length
can be collected in $O(n)$ time, while collecting occurrence
information across all possible pattern lengths takes $O(n^2)$ time.
We formally prove the correctness of the proposed mining procedure
and these complexity bounds.
We further evaluate both the computational efficiency 
and representation utility of the proposed method.  
Compared with a straightforward hash-based CT mining implementation,
our method achieves end-to-end speedups of up to approximately $7.0\times$
as the maximum pattern length increases.
Clustering experiments further show that small CT pattern dictionaries
retain meaningful structural information under limited feature budgets.

\section{Related Work}
Local subsequence structure has also been used directly
for time-series representation and learning.
Shapelet methods, for example, identify discriminative subsequences
for classification~\citep{GrabockaSWS2014}.
Another line of work represents recurring subsequences
through their relative ordering of values.
OPP-Miner represents subsequences by the relative ranks of their values
and formulates the mining of frequent order-preserving patterns~(OPPs),
whose occurrences exceed a minimum support threshold~\citep{OPP}. 
The same work further introduces MOPP-Miner for mining maximal OPPs,
whose order-preserving super-patterns are all infrequent, 
thereby providing a more compact representation of 
the frequent pattern set. 
OPST improves the scalability of maximal OPP mining 
by indexing OP-equivalent substrings 
in an order-preserving suffix tree~\citep{OPST}. 
Once the index is constructed, 
maximal frequent OP patterns can be mined in $O(n)$ time and space,
improving upon the $\Omega(n^3)$ worst-case complexity of MOPP-Miner.
The same framework also supports linear-time mining of 
closed frequent OP patterns.

In contrast to maximal-pattern mining, subsequent approaches largely 
retain threshold-based criteria for selecting patterns of interest.
EFO-Miner improves the efficiency of frequent OPP mining 
by reusing occurrence information from prefix and suffix sub-patterns 
to compute the support of longer patterns~\citep{EFO}. 
It further applies screening and pruning strategies 
to reduce redundant calculations.
Building on EFO-Miner, OPR-Miner further mines 
order-preserving rules between frequent prefix and super-patterns 
using a minimum confidence threshold.
Other studies extend the standard OPP mining formulation 
by modifying the criteria used to identify patterns of interest.
OPF-Miner modifies the support criterion 
by introducing a forgetting mechanism that assigns larger weights 
to more recent occurrences~\citep{OPF}. 
A pattern is considered frequent when 
its forgetting-weighted support exceeds a minimum support threshold.
SOPP-Miner further considers the temporal distribution
of pattern occurrences~\citep{SOPP}. 
It selects patterns whose occurrence gaps satisfy 
a stability threshold based on their coefficient of variation.

Our problem is technically aligned with threshold-based OPP mining, 
in that patterns are selected according to their occurrence frequency. 
However, order-preserving equivalence distinguishes subsequences 
by their complete relative order, 
which can fragment structurally similar trends 
into many separate patterns. 
We instead adopt the coarser Cartesian-tree equivalence 
to consolidate such ordinal variations 
into a more compact structural pattern space, 
and develop an efficient Cartesian suffix tree-based algorithm 
for mining frequent CT patterns.

\section{Preliminary}
An \emph{alphabet} is a set of characters.
A string~$T$ over an alphabet~$\Sigma$
is a sequence of characters from $\Sigma$.
Specifically, if $\Sigma \subseteq \sZ \cup\{\infty,-\infty\}$,
then $T$ is an \emph{integer string}.
We denote the concatenation of
two strings~$T_1$ and $T_2$
by $T_1 \cdot T_2$.
We denote the length of a string~$T$
by $|T|$.
An empty string with zero length
is represented by $\emptystr$.
For a string~$T$,
we denote the $i^{\text{th}}$ character of $T$
by $T[i]$.
A substring of $T$
is the contiguous characters
whose positions are
in the interval $[i:j]$,
and is denoted by $T[i:j]$.
In particular, if $i>j$, then we have $T[i:j]=\emptystr$.
A substring which starts from
the first position of $T$ is a \emph{prefix} of $T$.
Likewise, a substring which ends at
the last position of $T$ is a \emph{suffix} of $T$.

\subsection{Cartesian Tree Equivalence}
The \emph{Cartesian tree}~$\ctree{T}$ of a string~$T$
is a binary tree recursively defined as follows \citep{Vuillemin80}:
\begin{itemize}
    \item If $T=\emptystr$, then $\ctree{T}$ is an empty tree.
    \item Otherwise, let $T[i]$ be the leftmost minimum value among $T$.
    Then $\ctree{T}$ is the tree with $T[i]$ as the root,
    $\ctree{T[1:i-1]}$ as the left subtree,
    and $\ctree{T[i+1:|T|]}$ as the right subtree.
\end{itemize}
Figure~\ref{CTPM:app:fig:ctexample}
in Section~\ref{CTPM:app:ssec:example_figures}
illustrates an example for the Cartesian tree.
The Cartesian tree for any string~$T$
can be built in $O(|T|)$ time \citep{GabowJR1984}.

Two Cartesian trees~$\gT_1$ and $\gT_2$ are equal
if and only if they have the same structure excluding the labels.
We say that two strings~$T_1$ and $T_2$ are \emph{CT equivalent},
or simply \emph{match}, if and only if $\ctree{T_1}=\ctree{T_2}$.
For example, two strings~$T_1=(3,2,7,1,6,4,5,8)$
and $T_2=(8,6,7,1,3,2,4,5)$ are CT equivalent
because $\ctree{T_1}=\ctree{T_2}$.

The \emph{parent-distance (PD) representation} of a string~$T$
is an integer string~$\pdrep{T}$ defined as follows \citep{ParkALP19}:
$$ \pdrep{T}[i]=
\begin{cases}
    i - \max\{j\mid j<i~\text{and}~T[j]\le T[i]\} & \text{if such $j$ exists} \\
    \infty & \text{otherwise}
\end{cases}
$$
\citet{ParkALP19} proved that
$\ctree{T_1}=\ctree{T_2}$ if and only if
$\pdrep{T_1}=\pdrep{T_2}$.
They also proved that
$\pdrep{T}$ for any $T$
can be computed in linear time.
Therefore, the parent-distance representation
works as an alternative representation of
the Cartesian tree of a string.

For a string~$T$,
a string~$p$ \emph{occurs} in $T$
if and only if there exists
a substring $t=T[i:i+|p|)$ of $T$
that is CT equivalent to $p$,
and we say the position~$i$ is an \emph{occurrence} of $p$ in $T$.
The number of occurrences, or \emph{frequency}, of $p$ in $T$
is denoted by $\occ{p}{T}$.

\subsection{Cartesian Suffix Tree}
For a string~$T$, let us denote an ordered set of
the parent-distance representations
of the suffixes of $T$ by
$\suffixcol{T} =\{t_1,t_2,\ldots,t_{|T|+1}\}$,
where $t_i = \pdrep{T[i:|T|]}\cdot(-1)$ for $1\le i \le |T|$,
and $t_{|T|+1}=(-1)$.
Note that the character $-1$
serves as a unique endmarker
for each string.
For each $t_i\in\suffixcol{T}$,
we call the integer~$i$ the \emph{suffix number} of the string.

The \emph{Cartesian suffix tree}~$\csuftree{T}$ of a string~$T$
is a compacted trie
built with the set~$\suffixcol{T}$,
which can be built in worst-case 
$O(|T|\log |T|)$ time
and $O(|T|)$ space \citep{ParkALP19}.
Each edge of the tree
holds the information of
a substring~$t_k[i:j)$
of a string~$t_k\in\suffixcol{T}$.
To reduce the space required to
store this information,
the edges only store
the suffix number~$k$,
the starting position~$i$,
and the ending position~$j$.
In our implementation though,
we store this information
in the node to which the edge points.
We also refer to this information using the name of the node~$n$,
by $\indexof{n}$, $\posstart{n}$, and $\posend{n}$.
Note that $|\suffixcol{T}[\indexof{n}]|=|T|+2-\indexof{n}$.
For a node~$n$,
we denote the number of leaf nodes in the subtree
rooted at $n$ by $\leafnum{n}$.
This piece of information is not present
right after the construction,
but we can preprocess it by a single sweep
of DFS on the tree.

For a node~$n$,
we denote the string obtained
by concatenating strings
corresponding to each edge
in the path from the root to $n$ by
$\nodestr{n}$.
We also define
$\realstr{n}{T} = T[\indexof{n}:\indexof{n}+\posend{n}-1)$;
i.e. the corresponding substring in $T$.
Note that we have $\nodestr{n}=\suffixcol{T}[\indexof{n}][1:\posend{n})$
and $\nodestr{n}=\pdrep{\realstr{n}{T}}$.
One can refer to Figure~\ref{CTPM:app:fig:cstexample}
in Section~\ref{CTPM:app:ssec:example_figures} for an example.

\subsection{Problem Definitions}
The problem that we are tackling is to compute
the frequency of every available pattern.
\begin{problem}[Cartesian tree pattern mining]
\label{CTPM:main:prob:gctpm}
    Given a text~$T$,
    compute the following for
    every string~$p$ with $|p|\le|T|$:
    $$|\{i \mid \ctree{T[i:i+|p|)}=\ctree{p}\}|$$
\end{problem}
A na\"ive solution
would be enumerating through every
substring of $T$,
building their Cartesian tree
(or respective parent-distance representation),
and counting them.
This would take $O(|T|^3)$ time
for each possible length~$|p|\le |T|$,
as there are a total of $O(|T|)$ substrings,
each requiring another $O(|T|)$ time
to construct its Cartesian tree,
and it takes yet another $O(|T|)$ time
on pairwise comparison during counting.
Therefore the na\"ive algorithm
takes $O(|T|^4)$ time overall.
In this paper, we propose an algorithm
that takes $O(|T|^2)$ overall time,
reducing the running time by a factor of $O(|T|^2)$.

\begin{figure}
    \centering
    \includegraphics[page=4, width=0.8\linewidth]{figures/figures.pdf}
    \caption{Overview of Algorithm~\ref{CTPM:main:alg:gctpm}.
    It constructs the Cartesian suffix tree of a given string and
    iterates through the nodes to compute the occurrences.}
    \label{CTPM:main:fig:overview}
\end{figure}

\section{Proposed Algorithm}
\subsection{Cartesian Tree Pattern Mining}
In order to solve the problem efficiently,
we need to somehow reduce the bottleneck
caused by brute-forcing through all the substrings of $T$.
The idea here is to utilize a Cartesian suffix tree,
because for a string~$T$,
the number of occurrences of $T[i:j]$ in $T$
is already embedded in some node in $\csuftree{T}$.
The following lemma supports this statement.
\begin{lemma}
\label{CTPM:main:lem:leafcntocc}
    For a non-root node~$n$ in $\csuftree{T}$,
    we have $\occ{\realstr{n}{T}[1:i)}{T}=\leafnum{n}$
    for all $\posstart{n}+1\le i\le\posend{n}$.
\end{lemma}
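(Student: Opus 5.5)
The proof rests on one identity: counting CT-occurrences of a pattern amounts to counting the suffixes of $T$ whose parent-distance representations begin with $\pdrep{\text{pattern}}$. Two facts make this work. First, CT equivalence coincides with equality of parent-distance representations \citep{ParkALP19}. Second, the PD representation is prefix-stable: for any string $S$ and any $m\le|S|$, $\pdrep{S[1:m]}=\pdrep{S}[1:m]$. Stability holds because $\pdrep{S}[k]$ depends only on $S[1:k]$, since the defining maximum ranges over $j<k$.

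Fix a non-root node $n$ and an index $i$ with $\posstart{n}+1\le i\le\posend{n}$. Put $p=\realstr{n}{T}[1:i)$, so that $|p|=i-1$. By definition $\realstr{n}{T}=T[\indexof{n}:\indexof{n}+\posend{n}-1)$, hence $p=T[\indexof{n}:\indexof{n}+i-1)$. Prefix stability and the identity $\nodestr{n}=\suffixcol{T}[\indexof{n}][1:\posend{n})$ then give
\[
\pdrep{p}=\suffixcol{T}[\indexof{n}][1:i)=\nodestr{n}[1:i).
\]
A position $k$ is an occurrence of $p$ exactly when $k+|p|-1\le|T|$ and $\pdrep{T[k:k+|p|)}=\pdrep{p}$. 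By prefix stability applied to $T[k:|T|]$, this is equivalent to $t_k[1:i)=\pdrep{p}$. The suffix $t_{|T|+1}=(-1)$ can never qualify: $i-1\ge\posstart{n}\ge1$, and no PD entry equals $-1$. Conversely, if $k+|p|-1>|T|$, then the $-1$ endmarker of $t_k$ falls within its first $i-1$ characters. Since $\pdrep{p}$ contains no $-1$, such a $k$ also fails the test. So $\occ{p}{T}$ equals the number of strings in $\suffixcol{T}$ that have $\nodestr{n}[1:i)$ as a prefix.

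The remaining step is the trie argument. In the compacted trie, the edge into $n$ spells $\nodestr{n}[\posstart{n}:\posend{n})$, and the parent of $n$ has string depth $\posstart{n}-1$. The condition $i\ge\posstart{n}+1$ makes the prefix $\nodestr{n}[1:i)$ have length at least $\posstart{n}$, and $i\le\posend{n}$ keeps its length at most the depth of $n$. The prefix therefore ends at a locus strictly below the parent and on or above $n$, on the edge into $n$. Because the edge is compacted, with no branching in its interior, a string of $\suffixcol{T}$ has this prefix iff its root-to-leaf path passes through that locus, iff its leaf lies in the subtree of $n$.

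The endmarker $-1$ makes every suffix end at a distinct leaf and never be a proper prefix of another suffix. Hence leaves correspond bijectively to suffixes, and the count is exactly $\leafnum{n}$. The main obstacle is bookkeeping rather than ideas: the off-by-one conventions for $\posstart{n}$, $\posend{n}$ and half-open intervals, and the exclusion of the endmarker suffix and of suffixes too short to host $p$. Both exclusions are handled by the observation that $-1$ never appears in a PD representation.
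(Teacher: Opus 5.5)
Your argument is correct and follows the same leaf-counting idea as the paper, but it is more complete. The paper only shows that every leaf $m$ below $n$ makes $\indexof{m}$ an occurrence. That exhibits $\leafnum{n}$ distinct occurrences but never argues there are no others. Your reduction of occurrences of $p$ to the strings of $\suffixcol{T}$ having $\nodestr{n}[1:i)$ as a prefix, combined with the locus argument on the compacted edge into $n$, gives both inclusions. Your prefix-stability observation also justifies the identity $\nodestr{n}=\pdrep{\realstr{n}{T}}$, which the paper uses without comment.

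One step fails, however, in a case the statement itself gets wrong; the paper's proof shares this oversight. If $n$ is a leaf and $i=\posend{n}$, then $\nodestr{n}[1:i)$ is all of $t_{\indexof{n}}$, ending in $-1$. So $p=\realstr{n}{T}[1:i)$ would have length $|T|+2-\indexof{n}$ and run one position past the end of $T$. Your equation $\pdrep{p}=\nodestr{n}[1:i)$ is then impossible. The observation that $-1$ never occurs in a PD representation only handles suffixes of $T$ that are too short, not $p$ itself overflowing.

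No convention rescues this case. Take $T=(1,2)$: the leaf $n_2$ has incoming edge $(-1)$ with $\posstart{n_2}=2$ and $\posend{n_2}=3$. The only admissible $i$ is $3$, which asks for $T[2:4)$. Its truncation $T[2:2]$ occurs twice, whereas $\leafnum{n_2}=1$.

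The fix is to require $i\le|T|+2-\indexof{n}$. This removes only the leaf case $i=\posend{n}$, because no internal node's path label contains $-1$: two distinct suffixes never have the endmarker at the same position. This is exactly the case Algorithm~\ref{CTPM:main:alg:gctpm} skips, and the one the proof of Theorem~\ref{CTPM:main:thm:algsanity2} calls out of bounds. With that restriction stated explicitly, your proof is complete.
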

Therefore, by enumerating through every node in $\csuftree{T}$,
we can obtain the number of occurrences
of each substring of $T$
in constant time,
reducing the time required to search for the occurrences
of each substring.

Algorithm~\ref{CTPM:main:alg:gctpm} implements this approach.
The input for the algorithm is the text~$T$,
and it outputs the frequencies of
every substring of $T$
as an array of lists,
where the $i^{\text{th}}$ list
in the array holds the frequencies
of substrings of $T$ with length $i$.

\begin{algorithm}
\caption{Cartesian tree pattern mining}
\label{CTPM:main:alg:gctpm}
\begin{algorithmic}[1]
\Procedure{CTPM}{$T$}
\State $ST \gets\csuftree{T}$
\State Preprocess all nodes in $ST$ with their leaf count
\label{CTPM:main:ln:a2pre}
\State $N \gets~$List of nodes in $ST$
\label{CTPM:main:ln:a2listnodes}
\State $X \gets$ Array of empty lists, of length $|T|+1$
\For{$i \gets 1,\ldots,|N|$}
\label{CTPM:main:ln:a2mainloop}
    \State $n \gets N[i]$
    \If{($\posend{n} - \posstart{n}=1$ and $n$ is a leaf node) or ($n$ is the root node)}
    \label{CTPM:main:ln:a2supp1}
        \Continue
    \EndIf
    \For{$l\gets\posend{n},\ldots,\posstart{n}+1$}
    \label{CTPM:main:ln:a2reploop}
        \If{$l+\indexof{n}=|T|$}
        \label{CTPM:main:ln:a2supp2}
            \Continue
        \EndIf
        \State Append $n$ to $X[l-1]$
    \EndFor
\EndFor
\State \Return $X$
\EndProcedure
\end{algorithmic}
\end{algorithm}

The algorithm
outputs nodes in $\csuftree{T}$
instead of substrings.
This is to keep the running time
of the algorithm from going above $O(|T|^2)$;
if we explicitly built the substring
for each length~$n$,
then we would take $O(n)$ time for each length,
multiplying the running time by a factor of $O(|T|)$.
By outputting the node instead,
one can always manually retrieve
the parent-distance representation
of the substring
by using the information
embedded in the node, as in the following lemma.

\begin{lemma}
\label{CTPM:main:lem:recoversubstr}
    Given a non-root node~$n$ in a Cartesian suffix tree~$\csuftree{T}$
    of the string~$T$,
    one can obtain $\nodestr{n}$
    and $\occ{\realstr{n}{T}}{T}$ using only
    the information embedded in $n$ and the string~$T$.
\end{lemma}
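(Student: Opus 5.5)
The plan is to use only the three integers stored in $n$, namely $\indexof{n}$, $\posstart{n}$ and $\posend{n}$, together with the leaf count $\leafnum{n}$ from the preprocessing in line~\ref{CTPM:main:ln:a2pre} of Algorithm~\ref{CTPM:main:alg:gctpm}. We treat $\leafnum{n}$ as embedded in $n$, as the implementation does. The proof has two independent parts: recovering $\nodestr{n}$, and recovering the occurrence count.

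For the first part, we use the identity from the preliminaries: $\nodestr{n}=\suffixcol{T}[\indexof{n}][1:\posend{n})$, and for $\indexof{n}\le|T|$ this equals $\pdrep{\realstr{n}{T}}$, with $\realstr{n}{T}=T[\indexof{n}:\indexof{n}+\posend{n}-1)$. We therefore cut out the substring of $T$ starting at $\indexof{n}$ with length $\posend{n}-1$, and compute its parent-distance representation in linear time using the stack-based method of \citet{ParkALP19}.

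Two details need checking here.
\begin{itemize}
\item The parent-distance representation of a prefix of a suffix equals the corresponding prefix of the suffix's representation. This holds because $\pdrep{\cdot}[i]$ depends only on positions $j<i$.
\item If the stored interval reaches the endmarker, the last character of $\nodestr{n}$ is $-1$. This happens only at a leaf, where $\posend{n}-1=|T|+2-\indexof{n}$. In that case we append $(-1)$ after computing the parent-distance representation of the remaining suffix of $T$.
\end{itemize}
This case split matters for bookkeeping only.

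For the second part, we apply Lemma~\ref{CTPM:main:lem:leafcntocc} with $i=\posend{n}$, which lies in $[\posstart{n}+1,\posend{n}]$ because every edge label is nonempty, i.e.\ $\posend{n}>\posstart{n}$. The lemma then gives $\occ{\realstr{n}{T}[1:\posend{n})}{T}=\leafnum{n}$. The prefix of length $\posend{n}-1$ is all of $\realstr{n}{T}$, so $\occ{\realstr{n}{T}}{T}=\leafnum{n}$, a value stored at $n$.

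The main obstacle is again the endmarker. A leaf whose label ends in $-1$ corresponds to an interval one longer than any genuine substring of $T$. For such nodes we restrict the claim to the real part $T[\indexof{n}:|T|]$, i.e.\ to the length $\posend{n}-1$ measured against $\suffixcol{T}$. We also note that these leaves are exactly the cases Algorithm~\ref{CTPM:main:alg:gctpm} skips at lines~\ref{CTPM:main:ln:a2supp1} and~\ref{CTPM:main:ln:a2supp2}. Once this off-by-one is handled, both recoveries use only $T$ and the data stored at $n$.
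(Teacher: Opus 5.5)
Your proposal is correct and is essentially the paper's proof: the paper likewise recovers $\nodestr{n}$ as $\pdrep{T[\indexof{n}:|T|]}\cdot(-1)$ truncated to $[1:\posend{n})$, and obtains the count as $\leafnum{n}$ from Lemma~\ref{CTPM:main:lem:leafcntocc} at $i=\posend{n}$; you only differ in computing the PD of the needed prefix directly, using that the PD of a prefix is the prefix of the PD. One point to tighten in your leaf remark, which the paper does not address at all: restricting to the real suffix $T[\indexof{n}:|T|]$ yields count $\leafnum{n}$ only when $\posend{n}-\posstart{n}\ge 2$, because for a leaf whose edge is just $(-1)$ that suffix is spelled exactly at the parent node (for $T=(1,2)$, the suffix $(2)$ occurs twice while its leaf has leaf count $1$), which is consistent with line~\ref{CTPM:main:ln:a2supp1} skipping such leaves so that the parent reports that length.
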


One can then trim $\nodestr{n}$
to obtain the desired
PD representation of the
CT pattern.
For example, if we had
the node~$n$ in the list~$X[l]$,
then we obtain the corresponding string by
$\nodestr{n}[1:l]$.

The following theorem certifies that
Algorithm~\ref{CTPM:main:alg:gctpm} runs in $O(|T|^2)$ time
and outputs the correct data.
\begin{theorem}
\label{CTPM:main:thm:algsanity2}
    Algorithm~\ref{CTPM:main:alg:gctpm} correctly
    reports frequencies of CT patterns
    for all substrings of length $q\le|T|$
    in $O(n^2)$ time.
\end{theorem}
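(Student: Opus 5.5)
We split the argument into correctness and running time, using Lemma~\ref{CTPM:main:lem:leafcntocc} as the main tool.

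\textbf{Correctness.} Fix a length $q$ and a pattern $p$ with $|p|=q$ that occurs in $T$. Choose an occurrence $T[i:i+q)$. The string $\pdrep{T[i:i+q)}$ is a prefix of $t_i\in\suffixcol{T}$; this holds because the parent-distance representation of a prefix is the prefix of the parent-distance representation. So $\pdrep{T[i:i+q)}$ spells a path from the root that ends either exactly at a node or strictly inside an edge. Let $n$ be the node at the lower end of that edge. Then $\posstart{n}+1\le q+1\le\posend{n}$, and Lemma~\ref{CTPM:main:lem:leafcntocc} with index $q+1$ gives $\occ{p}{T}=\leafnum{n}$. The loop at line~\ref{CTPM:main:ln:a2reploop} uses $l=q+1$ and appends $n$ to $X[q]$, so $p$ is reported together with its frequency.

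Conversely, every pair $(n,l)$ appended corresponds to a string $\nodestr{n}[1:l-1]$ that is a CT pattern of length $l-1$ actually occurring in $T$. Distinct pairs with the same $l$ give distinct PD strings, since a compacted trie has a unique locus for each string. Hence there are no duplicates.

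It remains to check that the two \texttt{continue} conditions discard exactly the spurious entries, namely strings that involve the endmarker $-1$ and so do not correspond to substrings of $T$:
\begin{itemize}
\item the root, which has no edge;
\item a leaf whose edge is the single character $-1$;
\item the endmarker position of each leaf edge, i.e.\ the case $l+\indexof{n}=|T|$ (up to the paper's indexing convention), where $\nodestr{n}[1:l-1]$ would include the $-1$.
\end{itemize}
Verifying this index bookkeeping, in particular that $l-1$ is the correct length and that exactly one value is skipped per leaf edge, is the step I expect to be most delicate. I would check it directly from $|\suffixcol{T}[\indexof{n}]|=|T|+2-\indexof{n}$.

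\textbf{Running time.} Building $\csuftree{T}$ takes $O(|T|\log|T|)$ time. The leaf counts in line~\ref{CTPM:main:ln:a2pre} and the node list in line~\ref{CTPM:main:ln:a2listnodes} each take $O(|T|)$ time by DFS, since the tree has $O(|T|)$ nodes. The inner loop runs $\posend{n}-\posstart{n}$ times for node $n$, which is the length of its incoming edge, and each iteration costs $O(1)$. The total work is therefore the sum of all edge lengths in the compacted trie. This sum is at most the total number of characters over the $|T|+1$ suffixes, $\sum_i(|T|+2-i)=O(|T|^2)$. Equivalently, each appended pair is a distinct (pattern, length) occurrence class, and there are at most $\binom{|T|+1}{2}$ substrings. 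So the whole algorithm runs in $O(|T|^2)$ time, which is the $O(n^2)$ bound of the statement with $n=|T|$.
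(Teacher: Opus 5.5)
Your argument is essentially the paper's. Both rest on Lemma~\ref{CTPM:main:lem:leafcntocc}: the paper sweeps each root-to-leaf path to cover every prefix of every suffix, while you go from a pattern to its locus. Your no-duplicates remark and the edge-length sum are small sharpenings of what the paper leaves implicit. The problem is the step you deferred. It does not check out, so it cannot be waved through as ``up to the paper's indexing convention.''

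The paper's stated conventions are 1-based positions, half-open edges $t_k[\posstart{n}:\posend{n})$, and iteration $l$ reporting the prefix of length $l-1$. Under these, a leaf $n$ has $\nodestr{n}=t_{\indexof{n}}$ of length $|T|+2-\indexof{n}$, so $\posend{n}=|T|+3-\indexof{n}$. The unique spurious iteration on its edge is therefore $l=\posend{n}$, i.e.\ $l+\indexof{n}=|T|+3$, not $|T|$.

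Worse, the printed test on line~\ref{CTPM:main:ln:a2supp2} is not restricted to leaves. Since $\indexof{n}$ is an arbitrary suffix number below $n$, the test can fire on genuine patterns:
\begin{itemize}
\item For node $m$ in Section~\ref{CTPM:app:ssec:example_figures} with $\indexof{m}=5$, the iteration $l=3$ gives $l+\indexof{m}=8=|T|$. This suppresses $(\infty,\infty)$, which occurs three times.
\item On the leaf of suffix~$1$, the test drops the genuine length-$6$ prefix, yet the length-$9$ string ending in $-1$ is still reported.
\end{itemize}
So your claim that the two \texttt{continue} conditions discard exactly the spurious entries is false for the pseudocode as printed.

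The claim becomes true once line~\ref{CTPM:main:ln:a2supp2} tests ``$n$ is a leaf and $l=\posend{n}$.'' Equivalently, it can test $l+\indexof{n}=|T|+3$, which can only hold at a leaf: an internal node's string contains no $-1$, so $\posend{n}\le|T|+2-\indexof{n}$. With this test:
\begin{itemize}
\item exactly one iteration per leaf is skipped;
\item the leaf clause of line~\ref{CTPM:main:ln:a2supp1} becomes redundant;
\item your $l=q+1$ iteration is never suppressed, because $\pdrep{p}$ contains no $-1$;
\item the rest of your proof goes through.
\end{itemize}
The paper's own proof asserts the suppression without checking the indices, so this defect is shared. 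Still, your proof is complete only after you carry out this verification and state the corrected test.
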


\subsection{Labelling CT Patterns}
We now try to
identify each element in $X$
with a unique integer
for the sake of further processing.
To do this, we need to build a function
that maps the given parent-distance representation
to a single integer.
For the sake of space efficiency,
we first examine the number of
valid Cartesian trees
of a given length~$n$.

\begin{lemma}
\label{CTPM:main:lem:ctcount}
    Let $C_n=|\{\ctree{T}\mid|T|=n\}|$
    for $n \ge 1$ and $C_0 = 1$.
    Then, for all $n\ge1$, we have
    \begin{equation}
    \label{CTPM:main:eq:ctcount}
        C_{n}=\sum_{i=1}^{n} C_{i-1}C_{n-i},
    \end{equation}
    which coincides with the Catalan numbers.
\end{lemma}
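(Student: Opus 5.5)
We have to prove Lemma~\ref{CTPM:main:lem:ctcount}: the number of distinct Cartesian trees of strings of length $n$ obeys the Catalan recurrence. We do this by showing that $T\mapsto\ctree{T}$, restricted to strings of length $n$, has as image exactly the set of unlabelled binary trees with $n$ nodes. Then we decompose such trees at the root.

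\textbf{Step 1: every Cartesian tree of a length-$n$ string is a binary tree with $n$ nodes.} This follows by induction from the recursive definition: each character becomes exactly one node. So the image is contained in the set $B_n$ of binary trees with $n$ nodes.

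\textbf{Step 2: every binary tree in $B_n$ is realized by some string of length $n$.} This surjectivity is the step that needs a construction; it is mildly delicate because of the leftmost-minimum tie-breaking. We argue by induction on $n$. Let $\gT\in B_n$ have left subtree $\gT_L$ of size $i-1$ and right subtree $\gT_R$ of size $n-i$. By induction, choose strings $T_L$ and $T_R$ of lengths $i-1$ and $n-i$ with $\ctree{T_L}=\gT_L$ and $\ctree{T_R}=\gT_R$. Add a constant to every entry of $T_L$ and $T_R$ so that all their values exceed $0$; this does not change their Cartesian trees, since the tree depends only on relative comparisons (equivalently, by the parent-distance characterization of \citet{ParkALP19}). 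Set $T=T_L\cdot(0)\cdot T_R$. Then $T[i]=0$ is the unique, hence leftmost, minimum, so $\ctree{T}$ has root at position $i$ with subtrees $\ctree{T_L}$ and $\ctree{T_R}$, i.e. $\ctree{T}=\gT$. Thus $C_n=|B_n|$ for $n\ge1$, and the convention $C_0=1$ matches the single empty tree.

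\textbf{Step 3: the recurrence.} A tree in $B_n$ is determined uniquely by the size $i-1$ of its left subtree, for $i$ ranging over $1,\ldots,n$, together with an arbitrary pair (left subtree in $B_{i-1}$, right subtree in $B_{n-i}$). Distinct choices give distinct trees, since trees are compared by structure. Summing over $i$ gives $C_n=\sum_{i=1}^n C_{i-1}C_{n-i}$. This recurrence, with $C_0=1$, is exactly the Catalan recurrence, so $C_n$ equals the $n$th Catalan number. The only real care needed in the whole argument is the tie-breaking in Step 2, which the unique-minimum construction avoids.
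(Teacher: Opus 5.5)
Your proof is correct and uses the same root-position decomposition as the paper's inductive argument, summing $C_{i-1}C_{n-i}$ over the position $i$ of the root. The only difference is that you explicitly check, via the shifted construction $T_L\cdot(0)\cdot T_R$, that every pair of left and right subtrees is realized with the root at position $i$ and that distinct pairs give distinct trees; the paper's proof assumes both facts when it multiplies the two counts.
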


Since Cartesian trees and parent-distance representations
are in a one-to-one correspondence,
the following corollary is immediate from Lemma~\ref{CTPM:main:lem:ctcount}:
\begin{corollary}
\label{CTPM:main:cor:pdcount}
    Let $P_n=|\{\pdrep{T}\mid|T|=n\}|$
    for $n \ge 1$ and $P_0 = 1$.
    We have $P_n=C_n$ for all $n$.
\end{corollary}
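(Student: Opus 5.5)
The statement to prove is the one immediately above, Corollary~\ref{CTPM:main:cor:pdcount}, that $P_n=C_n$. I would prove it by exhibiting a bijection between the set $\{\pdrep{T}\mid |T|=n\}$ and the set $\{\ctree{T}\mid |T|=n\}$, and then invoking Lemma~\ref{CTPM:main:lem:ctcount} only for the identification of $C_n$ with the Catalan numbers.

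The natural map is $\Phi(\pdrep{T}) = \ctree{T}$. Take $T$ ranging over all strings of length $n$.

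\textbf{Well-definedness.} Suppose $\pdrep{T_1}=\pdrep{T_2}$. By the result of \citet{ParkALP19} recalled in the preliminaries, $\ctree{T_1}=\ctree{T_2}$, so $\Phi$ does not depend on which representative $T$ is chosen.

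\textbf{Injectivity.} Suppose $\Phi(\pdrep{T_1})=\Phi(\pdrep{T_2})$, that is, $\ctree{T_1}=\ctree{T_2}$. The converse direction of the same equivalence gives $\pdrep{T_1}=\pdrep{T_2}$.

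\textbf{Surjectivity.} Every element of $\{\ctree{T}\mid|T|=n\}$ is by definition $\ctree{T}$ for some $T$ of length $n$. It is therefore the image of $\pdrep{T}$.

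Hence $\Phi$ is a bijection between two finite sets, so $P_n=C_n$ for every $n\ge 1$. For $n=0$ both quantities equal $1$ by convention.

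No step here is a real obstacle, since the whole content lies in the if-and-only-if of \citet{ParkALP19}. The one point needing care is the base case. The empty string has an empty PD representation and an empty Cartesian tree, which is consistent with setting $P_0=C_0=1$. Combined with Lemma~\ref{CTPM:main:lem:ctcount}, this shows that $P_n$ is the $n$-th Catalan number.
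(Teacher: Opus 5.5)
Your proof is correct and matches the paper's argument. The paper simply notes that the if-and-only-if result of Park et al.\ puts $\pdrep{T}$ and $\ctree{T}$ in one-to-one correspondence, so $P_n=C_n$ is immediate; you have spelled out the well-definedness, injectivity, and surjectivity checks behind that remark, with the $n=0$ case covered by the stipulated conventions $P_0=C_0=1$.
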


Now we are finally ready to build the desired function.
Given a length~$n$,
we can set the range of the label to
$[0:C_n)$ and
map the Cartesian trees (or equivalent PD representations)
to the range.
\begin{theorem}
\label{CTPM:main:thm:labelct}
    Given an integer~$n$,
    one can build a one-to-one function
    $f_n:\{PD(T)\mid |T|=n\}\rightarrow [0:C_n)$
    which maps each valid parent-distance representation
    to a unique integer.
\end{theorem}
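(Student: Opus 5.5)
We build $f_n$ explicitly as a ranking function that follows the recursive decomposition behind Lemma~\ref{CTPM:main:lem:ctcount}. Corollary~\ref{CTPM:main:cor:pdcount} says the domain $\{\pdrep{T}\mid |T|=n\}$ has exactly $C_n$ elements. Hence any injective map into $[0:C_n)$ is automatically a bijection, and it is enough to exhibit an injection whose values stay below $C_n$.

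\textbf{Definition of $f_n$.} Take a valid representation $P=\pdrep{T}$ with $|T|=n$. Let $i$ be the position of the leftmost minimum of $T$, i.e., the root of $\ctree{T}$. Then $\ctree{T[1:i-1]}$ is the left subtree, of size $i-1$, and $\ctree{T[i+1:n]}$ is the right subtree, of size $n-i$. We set
\begin{equation*}
f_n(P)=\sum_{j=1}^{i-1} C_{j-1}C_{n-j} \;+\; f_{i-1}(P_L)\cdot C_{n-i} \;+\; f_{n-i}(P_R),
\end{equation*}
where $P_L$ and $P_R$ are the parent-distance representations of the left and right parts, and $f_0(\emptystr)=0$.

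\textbf{Proof by strong induction on $n$.} By the induction hypothesis, $f_{i-1}(P_L)\in[0:C_{i-1})$ and $f_{n-i}(P_R)\in[0:C_{n-i})$.
\begin{itemize}
\item For a fixed root position $i$, the last two terms form a mixed-radix encoding of the pair $(f_{i-1}(P_L),f_{n-i}(P_R))$. This encoding is a bijection onto $[0:C_{i-1}C_{n-i})$.
\item The offset $\sum_{j<i} C_{j-1}C_{n-j}$ places the blocks for different $i$ in disjoint consecutive intervals.
\item By equation~\ref{CTPM:main:eq:ctcount}, these intervals together fill exactly $[0:C_n)$.
\end{itemize}
Injectivity therefore reduces to one fact: $P$ determines $i$, $P_L$ and $P_R$, and conversely distinct trees differ in the root position or in one of the subtrees. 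This holds because Cartesian trees, and hence PD representations by the result of \citet{ParkALP19}, correspond one-to-one with binary tree shapes via the recursive definition.

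\textbf{Main obstacle: reading $i$, $P_L$, $P_R$ directly from $P$.} We want $f_n$ to be defined on PD strings themselves, not through some choice of $T$.
\begin{itemize}
\item \emph{Root position.} The leftmost minimum $T[i]$ has no earlier $j$ with $T[j]\le T[i]$, so $P[i]=\infty$. For $k>i$, position $i$ itself is an earlier index with $T[i]\le T[k]$, so $P[k]\neq\infty$. Hence $i$ is the last position with $P[i]=\infty$.
\item \emph{Left part.} PD values depend only on earlier positions, so $P_L=P[1:i-1]$.
\item \emph{Right part.} $P_R$ equals $P[i+1:n]$, except that entries pointing to position $i$ or earlier become $\infty$. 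Such a pointer occurs exactly when $P[k]\ge k-i$; the prefix-minimum elements of the right part have nearest smaller-or-equal element $T[i]$ itself. So these entries are replaced by $\infty$ and all other entries are kept.
\end{itemize}
With these explicit rules, $f_n$ is a well-defined function of $P$, and the induction above completes the proof. The values $C_k$ can be precomputed from equation~\ref{CTPM:main:eq:ctcount}.
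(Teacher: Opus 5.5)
Your proposal is correct and matches the paper's construction: you split by the root position (the rightmost $\infty$ in the PD string), offset by $\sum_{j<i} C_{j-1}C_{n-j}$, and encode the pair of subtree labels with radix $C_{n-i}$ inside a block of size $C_{i-1}C_{n-i}$. You add detail the paper leaves implicit, namely the explicit strong induction for injectivity and the rules for reading $P_L$ and $P_R$ from $P$ itself (the paper computes them from $T$, so well-definedness on PD strings is not spelled out there), but the route is the same.
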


Algorithm~\ref{CTPM:main:alg:labelct} specifies the algorithm
for Theorem~\ref{CTPM:main:thm:labelct}.
Given a string~$T$,
the algorithm returns the result of
the function $f_{|T|}(\pdrep{T})$ from Theorem~\ref{CTPM:main:thm:labelct}.
For an explanation of how the function works,
refer to the Section~\ref{CTPM:app:ssec:thmlabelctpf}.

\begin{algorithm}
\caption{Labelling Cartesian tree}
\label{CTPM:main:alg:labelct}
\begin{algorithmic}[1]
\Procedure{LabelCT}{$T$}
\State $n \gets |T|$
\If{$n=0$}
    \State \Return $0$
\EndIf
\State $r \gets \text{Position of rightmost $\infty$ in $\pdrep{T}$}$ 
\State $s\gets\sum_{i=1}^{r-1}C_{i-1}C_{n-i}$     \Comment{Offset from the position of the root}
\State $l_l\gets\Call{LabelCT}{T[1:r-1]}$
\State $l_r\gets\Call{LabelCT}{T[r+1:|T|]}$
\State $p\gets l_l\times C_{n-r}+l_r$     \Comment{Offset from the subtrees}
\State \Return $s+p$
\EndProcedure
\end{algorithmic}
\end{algorithm}

Note that due to how the function works,
two different strings with different lengths
can have the same output from the function.
This can be mitigated by manually making
offsets for different lengths.
For example, if we needed to map
all the Cartesian trees with lengths between $5\le i\le7$,
one can add the offset $C_5$ (resp. $C_5+C_6$) when assigning a label
to the tree with length $6$ (resp. $7$).

\section{Experimental Results and Analysis}
\label{CTPM:main:sec:experimental-results}

We conduct experiments to evaluate the effectiveness
and computational efficiency of the proposed CT mining method.
We evaluate its feature-construction time and memory usage
relative to existing pattern-mining approaches.
We also investigate whether CT patterns provide compact representations
that preserve meaningful clustering structure under limited
feature budgets.
In order to better understand the effect of CT abstraction,
we further analyze how fine-grained OP patterns are consolidated
into shared CT features.
Finally, we examine the computational benefits of our implementation
over naive CT mining as the input length
and maximum pattern length increase.

\subsection{Experimental Setting}
\label{CTPM:main:sec:experimental-setup}

For the aggregate comparisons, we evaluate our method 
on five time-series datasets, 
namely Car, Beef, ElectricDevices, MoteStrain, and PigCVP.
Wafer is analyzed separately 
because its unusually extreme clustering behavior
can dominate macro-averaged results.
We therefore use Wafer as an illustrative case study
to examine how CT equivalence consolidates fine-grained OP patterns,
while reporting its complete results in the appendix.
We combine the official training and test splits of each dataset
for unsupervised clustering.
The baselines include Raw, PCA, OPP, EFO, OPF, OPR, MOPP, OPST, and SOPP.
Raw uses the original observations, PCA uses principal components,
and the pattern-based methods represent each series
using occurrences of selected patterns.
We vary the feature budget over $K\in\{10,20,30,40,50\}$
and use $K=30$ for the primary comparison.
Each representation is evaluated using $k$-means with five random seeds.
Ground-truth labels are used only to determine the number of clusters
and evaluate the resulting assignments.
We report normalized mutual information~(NMI),
homogeneity~($h$), and adjusted Rand index~(ARI).

Feature-construction time and memory usage are measured
over five independent runs.
Feature-construction time includes mining or PCA fitting,
feature generation, and invocation overhead, but excludes clustering.
Memory is measured as the maximum sampled sum of resident set size~(RSS)
across the worker process and its descendants 
during feature construction.
Memory comparisons use the common subset of datasets
with valid measurements for every compared miner in all five repetitions.
Detailed implementation choices, hyperparameters, dataset information,
and evaluation protocols are provided 
in Appendix~\ref{CTPM:app:sec:experimental_details}.

\subsection{Overall Quality and Efficiency}

\begin{table}[th]
\centering
\caption{Clustering quality and resource usage at $K=30$, averaged equally over five datasets excluding Wafer. Time includes mining and feature extraction. Bold and underline indicate the best and second-best values, respectively, excluding Raw.}
\label{CTPM:main:tab:main_results}
\small
\begin{tabular}{lrrrrr}
\toprule[1.2pt]
Method & NMI (\%) $\uparrow$ & $h$ (\%) $\uparrow$ & ARI (\%) $\uparrow$ & Time (s) $\downarrow$ & RSS (MiB) $\downarrow$ \\
\midrule[1.2pt]
Raw  & 33.93 & 32.25 & 18.87 & 0.00 & 181.5 \\
\midrule
PCA  & 34.30 & 32.73 & 19.15 & \textbf{0.07} & \textbf{184.6} \\
OPP  & \underline{37.30} & \underline{35.80} & 19.46 & 15.21 & 652.5 \\
EFO  & \underline{37.30} & \underline{35.80} & 19.46 & 12.18 & 640.6 \\
OPF  & 35.83 & 34.25 & 17.50 & 8.55 & 640.7 \\
OPR  & \underline{37.30} & \underline{35.80} & 19.46 & 12.70 & 645.7 \\
MOPP & 35.76 & 34.09 & 18.08 & 13.63 & 647.9 \\
OPST & 35.94 & 35.07 & \textbf{20.36} & 16.63 & 638.0 \\
SOPP & 29.00 & 27.56 & 10.12 & 8.63 & 637.9 \\
\CCG \textbf{CT}
 & \CCG \textbf{37.75}
 & \CCG \textbf{36.24}
 & \CCG \underline{19.95}
 & \CCG \underline{1.29}
 & \CCG \underline{412.2} \\
\bottomrule[1.2pt]
\end{tabular}
\end{table}

Table~\ref{CTPM:main:tab:main_results} summarizes the overall
quality-efficiency trade-off at the primary feature budget of $K=30$.
CT achieves the highest mean NMI and homogeneity, 
reaching $37.75\%$ and $36.24\%$, respectively.
These scores slightly exceed the strongest frequent-OP baselines,
OPP, EFO, and OPR, which obtain $37.30\%$ NMI and $35.80\%$ homogeneity.
For ARI, CT achieves $19.95\%$, second only to OPST at $20.36\%$.
Thus, the coarser CT equivalence retains clustering quality 
comparable to or better than the finer-grained OP representations 
under the same limited feature budget.

The representation quality is accompanied by a substantial reduction
in feature-construction cost.
CT requires only $1.29$ seconds on average, 
compared with $8.55$--$16.63$ seconds 
for the other pattern-mining methods.
Even relative to the fastest pattern-mining baseline, 
OPF, CT is approximately $6.6\times$ faster.
Its mean peak RSS of $412.2$~MiB is also substantially lower than
those of the OP-based miners, 
which require approximately $638$--$653$~MiB, 
although PCA remains both faster and more memory-efficient.
Taken together, these results show that CT does not obtain 
its computational advantage by sacrificing representation quality.
It provides competitive clustering features while substantially
reducing the cost of mining them.

\subsection{Clustering Quality within Limited Feature Budgets}

\begin{figure}[htbp]
\centering
\includegraphics[width=\linewidth]{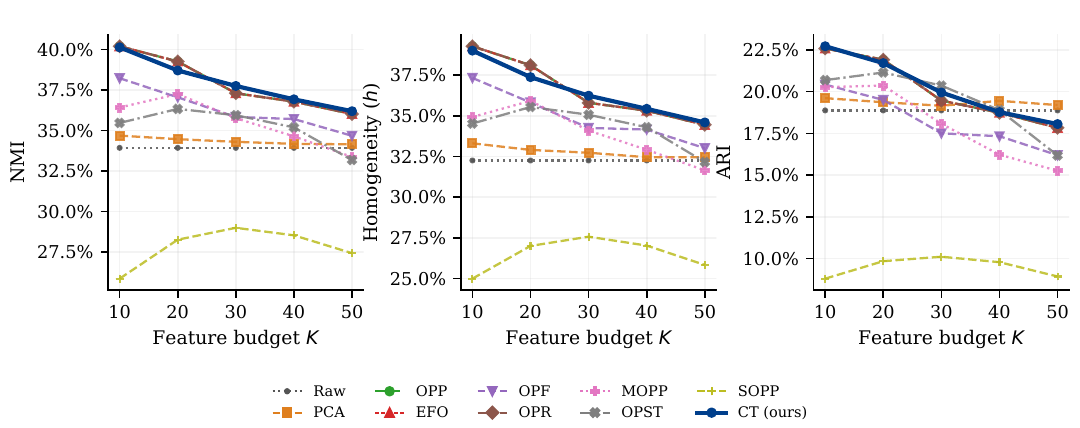}
\caption{Clustering quality versus feature budget, averaged over the five main datasets. Only the evaluated budgets are shown as ticks.}
\label{CTPM:main:fig:budget}
\end{figure}

Figure~\ref{CTPM:main:fig:budget} examines whether the
clustering results depend strongly on the choice of feature budget.
Across $K\in\{10,20,30,40,50\}$, CT remains close to the
strongest OP-based representations on all three clustering metrics.
Its performance remains relatively stable across $K$,
indicating that the result at the primary budget of $K=30$ is not 
an isolated operating point.
A notable trend across the pattern-based methods is that
increasing the number of selected features does not consistently
improve clustering quality.
For several methods, including CT, the best scores occur at
smaller or intermediate budgets and gradually decrease 
as additional patterns are included.
This suggests that feature budget controls not only 
representation capacity 
but also the amount of less informative structure 
introduced into the clustering space.
Accordingly, the quality of the selected patterns is at least 
as important as the size of the dictionary.
Overall, CT remains competitive across the evaluated budgets,
supporting its use as a compact representation.

\subsection{Why Does CT Abstraction Help?}

The preceding results show that CT retains 
competitive clustering quality under limited feature budgets 
while substantially reducing feature-construction cost.
We next examine how the coarser CT equivalence changes the selected
representation at the feature level.
We use Wafer as an illustrative case study.
Wafer is excluded from the five-dataset aggregate results because its
clustering behavior is unusually extreme and can dominate macro averages.
We therefore use it here only to analyze how OP patterns are consolidated
under CT equivalence, rather than as evidence for average performance.
A key effect of CT abstraction is that multiple OP patterns
with different complete relative orders can share 
the same Cartesian-tree structure.
Their occurrences are therefore accumulated into a single CT feature.
Table~\ref{CTPM:main:tab:ct_aggregation} shows representative examples.
For instance, CT pattern $(0,0,1)$ merges
the OP patterns $(2,1,3)$ and $(3,1,2)$,
and its occurrence-count feature is exactly 
the sum of the two corresponding OP features.
This aggregation moves the shared structural pattern 
to rank 8 in the CT dictionary, 
while the two constituent patterns appear 
at ranks 18 and 19 under EFO.

\begin{table}[ht]
\centering
\caption{
Examples of OP patterns consolidated by CT equivalence on Wafer.
Ranks are based on frequency-based feature selection.
$>K$ indicates that the OP pattern is outside the top-$K$ dictionary.
}
\label{CTPM:main:tab:ct_aggregation}
\small
\setlength{\tabcolsep}{4.5pt}
\begin{tabular}{llcc}
\toprule[1.2pt]
CT pattern & Constituent OP patterns & CT rank & EFO rank(s) \\
\midrule[1.2pt]
$(0,0,1)$
& $(2,1,3)$, $(3,1,2)$
& 8 & 18, 19 \\

$(0,1,0,1)$
& $(2,3,1,4)$, $(2,4,1,3)$, $(3,4,1,2)$
& 25 & 31, $>50$, $>50$ \\

$(0,0,1,0)$
& $(3,2,4,1)$, $(4,2,3,1)$
& 29 & 30, $>50$ \\
\bottomrule[1.2pt]
\end{tabular}
\end{table}

The same effect occurs for longer patterns.
CT pattern $(0,1,0,1)$ consolidates three distinct OP patterns,
only one of which appears in EFO's top-50 dictionary,
while $(0,0,1,0)$ combines two OP patterns,
one of which falls outside the top 50.
Thus, CT equivalence can concentrate support that is fragmented
across several finer-grained OP classes.
Under a limited feature budget, this changes how representational
capacity is allocated: one feature can represent a recurring
structural pattern instead of several ordinal variants.

\begin{figure}[ht]
    \centering
    \includegraphics[width=\linewidth]{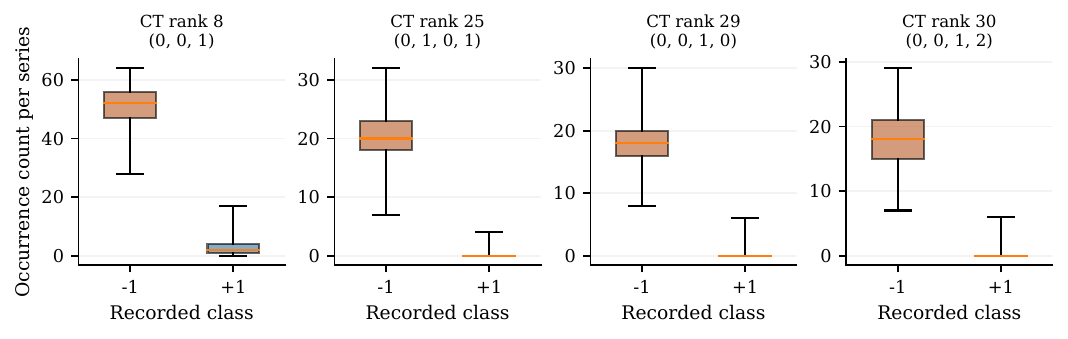}
    \caption{
    Occurrence counts of representative CT features on Wafer,
    grouped by the recorded class.
    Boxes show quartiles and medians, and whiskers span the observed range.
    The analysis is post-hoc and uses labels only for interpretation;
    labels are not used for feature selection or clustering.
    }
    \label{CTPM:main:fig:wafer_features}
\end{figure}

Figure~\ref{CTPM:main:fig:wafer_features} provides a complementary
view of the selected CT features.
Several features that enter the dictionary near $K=30$
show well-separated occurrence-count distributions
between the two recorded classes.
This observation helps explain how a coarser representation
can remain effective under a small feature budget.
Importantly, it does not imply that CT always produces fewer patterns.
Under threshold-based mining, several low-support OP classes
may merge into a CT class whose aggregated support exceeds the threshold,
so CT can in fact yield more qualifying features.
The benefit is instead that CT pools ordinal variations
into shared structural features and can promote those structures
within a frequency-ranked dictionary.
Finally, although several of these Wafer features have
well-separated class-wise occurrence ranges,
we do not attribute the clustering result to any individual feature.
No leave-one-feature-out ablation was performed.
The analysis therefore provides a mechanistic illustration,
rather than a causal explanation, of how CT abstraction
can preserve useful structure while reducing distinctions
between fine-grained ordinal patterns.

\subsection{Time and Memory against a Naive Implementation}

We compare CT with CT-Hash, a naive implementation
that independently constructs and hashes the Cartesian-tree
representation of each window.
Table~\ref{CTPM:main:tab:comparison_naive} compares CT with CT-Hash
on three datasets whose sequence lengths allow the maximum pattern length
to be increased up to $L=256$.
Across all three datasets, CT shows an increasing runtime advantage
as longer patterns are explored.
This trend is consistent with the different dependence on $L$
in the theoretical time bounds.
For fixed input length $n$, CT's $O(nL)$ bound is linear in $L$,
whereas CT-Hash's expected $O(nL^2)$ bound is quadratic.
By reusing structural information across overlapping windows,
CT avoids repeatedly encoding each window as the maximum pattern
length increases.
Matching outputs across all paired runs confirm that this reduction
in computation preserves the mining results.
The runtime improvement is accompanied by a time-space trade-off.
For Car and Beef, CT uses less memory than CT-Hash over most of the
evaluated range, although its memory usage eventually becomes comparable
to or slightly higher than CT-Hash at the largest $L$.
The trade-off is more pronounced on PigCVP, where CT achieves a growing
runtime advantage while its memory footprint increases substantially
with $L$.
These results show that retaining shared Cartesian structures can
considerably reduce repeated computation for longer-pattern mining,
at the cost of additional storage in demanding settings.

\begin{table}[ht]
\centering
\caption{Comparison with the naive CT-Hash implementation over increasing
maximum pattern length $L$. We report mean feature-extraction time over
five runs and peak process RSS.}
\label{CTPM:main:tab:comparison_naive}
\small
\setlength{\tabcolsep}{4.5pt}
\begin{tabular}{lllrrrrrr}
\toprule[1.2pt]
Dataset & Algorithm & Measure & $L=8$ & $L=16$ & $L=32$ & $L=64$ & $L=128$ & $L=256$ \\
\midrule[1.2pt]
\multirow{4}{*}[-3pt]{\shortstack[c]{Car\\($n=577$)}} & \multirow{2}{*}{CT} & Time. (s) & 0.357 & 0.508 & 0.818 & 1.370 & 2.333 & 3.656 \\
 &  & Mem. (MiB) & 127.2 & 152.0 & 251.4 & 490.3 & 692.5 & 901.4 \\
\cmidrule(lr){2-9}
 & \multirow{2}{*}{CT-Hash} & Time. (s) & 0.467 & 0.830 & 1.700 & 3.929 & 10.162 & 25.691 \\
 &  & Mem. (MiB) & 225.0 & 651.3 & 652.7 & 676.7 & 708.8 & 723.4 \\
\midrule
\multirow{4}{*}[-3pt]{\shortstack[c]{Beef\\($n=470$)}} & \multirow{2}{*}{CT} & Time. (s) & 0.283 & 0.353 & 0.563 & 0.974 & 1.456 & 1.819 \\
 &  & Mem. (MiB) & 127.2 & 127.2 & 166.3 & 322.4 & 578.0 & 681.7 \\
\cmidrule(lr){2-9}
 & \multirow{2}{*}{CT-Hash} & Time. (s) & 0.303 & 0.493 & 1.027 & 1.926 & 4.151 & 9.163 \\
 &  & Mem. (MiB) & 132.8 & 321.2 & 640.6 & 658.3 & 661.3 & 650.3 \\
\midrule
\multirow{4}{*}[-3pt]{\shortstack[c]{PigCVP\\($n=2000$)}} & \multirow{2}{*}{CT} & Time. (s) & 1.697 & 2.712 & 4.369 & 6.692 & 10.555 & 17.398 \\
 &  & Mem. (MiB) & 551.1 & 667.8 & 993.9 & 1,436.2 & 3,761.1 & 6,841.0 \\
\cmidrule(lr){2-9}
 & \multirow{2}{*}{CT-Hash} & Time. (s) & 1.669 & 4.766 & 10.206 & 18.916 & 33.675 & 62.629 \\
 &  & Mem. (MiB) & 653.1 & 730.4 & 739.7 & 755.6 & 752.3 & 753.7 \\
\bottomrule[1.2pt]
\end{tabular}
\end{table}

\section{Conclusion}

We introduced CT-Miner, an efficient framework for frequent pattern mining
under Cartesian-tree~(CT) equivalence.
By grouping ordinal variations that share the same hierarchical structure,
CT provides a coarser alternative to order-preserving representations.
Using a Cartesian suffix tree, CT-Miner reuses shared structure across
subsequences and reduces exhaustive occurrence collection to $O(n^2)$ time,
while supporting efficient mining at individual pattern lengths.
Experiments show that CT retains competitive clustering quality under limited
feature budgets while substantially reducing feature-construction cost
relative to existing pattern-mining approaches.
Our feature-level analysis further shows that multiple fine-grained OP patterns
can be consolidated into shared CT features, concentrating recurring structure
that would otherwise be distributed across separate ordinal patterns.
Controlled comparisons with a hash-based implementation also demonstrate
increasing runtime benefits as longer patterns are explored.
Together, these results suggest that CT equivalence provides a useful
structural abstraction for compact and efficient time-series pattern mining.

\subsection*{AI use statement}

In this work, we used generative AI tools to assist with the implementation
of experimental code.
We have not used generative AI tools for other tasks requiring disclosure,
and the rest of the required-disclosure tasks are not applicable to this work.
Additionally, we used generative AI tools to edit the manuscript for
readability and language polishing.
We have reviewed all AI-assisted work.
AI-assisted experimental code was manually reviewed and tested for correctness,
and all AI-assisted manuscript edits were reviewed by the authors.
We take responsibility for the final content of this work,
including text, claims or artifacts produced with the aid of generative AI.

\bibliography{iclr2027_conference}

@inproceedings{ParkALP19,
  author       = {Sung Gwan Park and
                  Amihood Amir and
                  Gad M. Landau and
                  Kunsoo Park},
  title        = {Cartesian Tree Matching and Indexing},
  booktitle    = {Proceedings of the 30th Annual Symposium on Combinatorial Pattern Matching},
  volume       = {128},
  pages        = {16:1--16:14},
  year         = {2019}
}

@article{Vuillemin80,
  author       = {Jean Vuillemin},
  title        = {A Unifying Look at Data Structures},
  journal      = {Communications of the ACM},
  volume       = {23},
  number       = {4},
  pages        = {229--239},
  year         = {1980}
}

@article{OPP,
  title={{OPP-Miner}: Order-preserving sequential pattern mining for time series},
  author={Wu, Youxi and Hu, Qian and Li, Yan and Guo, Lei and Zhu, Xingquan and Wu, Xindong},
  journal={IEEE Transactions on Cybernetics},
  volume={53},
  number={5},
  pages={3288--3300},
  year={2023},
}

@article{EFO,
  title={{OPR-Miner}: Order-preserving rule mining for time series},
  author={Wu, Youxi and Zhao, Xiaoqian and Li, Yan and Guo, Lei and Zhu, Xingquan and Fournier-Viger, Philippe and Wu, Xindong},
  journal={IEEE Transactions on Knowledge and Data Engineering},
  volume={35},
  number={11},
  pages={11722--11735},
  year={2023},
}

@article{OPF,
  title={{OPF-Miner}: Order-preserving pattern mining with forgetting mechanism for time series},
  author={Li, Yan and Ma, Chenyu and Gao, Rong and Wu, Youxi and Li, Jinyan and Wang, Wenjian and Wu, Xindong},
  journal={IEEE Transactions on Knowledge and Data Engineering},
  volume={36},
  number={12},
  pages={8981--8995},
  year={2024},
}

@inproceedings{OPST,
  title={Scalable order-preserving pattern mining},
  author={Li, Ling and Zuba, Wiktor and Loukides, Grigorios and Pissis, Solon P. and Matsangidou, Maria},
  booktitle={2024 IEEE International Conference on Data Mining},
  pages={211--220},
  year={2024},
}

@article{SOPP,
  title={{SOPP-Miner}: Stable Distribution Order-Preserving Pattern Mining for Time Series},
  author={Zhang, Yajie and Wu, Youxi and Li, Yan},
  journal={IEEE Transactions on Knowledge and Data Engineering},
  volume={38},
  number={8},
  pages={5406--5419},
  year={2026},
}

@article{DauBKYZGRK2019,
  title={The {UCR} time series archive},
  author={Dau, Hoang Anh and Bagnall, Anthony and Kamgar, Kaveh and Yeh, Chin-Chia Michael and Zhu, Yan and Gharghabi, Shaghayegh and Ratanamahatana, Chotirat Ann and Keogh, Eamonn},
  journal={IEEE/CAA Journal of Automatica Sinica},
  volume={6},
  number={6},
  pages={1293--1305},
  year={2019},
}

@article{ParkBALP2020,
  title={Finding patterns and periods in Cartesian tree matching},
  author={Park, Sung Gwan and Bataa, Magsarjav and Amir, Amihood and Landau, Gad M. and Park, Kunsoo},
  journal={Theoretical Computer Science},
  volume={845},
  pages={181--197},
  year={2020},
}

@inproceedings{GabowJR1984,
  author       = {Harold N. Gabow and
                  Jon Louis Bentley and
                  Robert Endre Tarjan},
  title        = {Scaling and Related Techniques for Geometry Problems},
  booktitle    = {Proceedings of the 16th Annual {ACM} Symposium on Theory of Computing},
  pages        = {135--143},
  year         = {1984},
}

@article{TriratSKNNBKKL2026,
author = {Trirat, Patara and Shin, Yooju and Kang, Junhyeok and Nam, Youngeun and Na, Jihye and Bae, Minyoung and Kim, Joeun and Kim, Byunghyun and Lee, Jae-Gil},
title = {Universal Time-Series Representation Learning: A Survey},
year = {2026},
volume = {58},
number = {12},
journal = {ACM Computing Surveys},
pages = {1--40},
}

@article{ZhouWT2026,
  title={Order-preserving pattern matching: Review},
  author={Zhou, Yu and Wang, Feng and Tang, Yongning},
  year={2026},
  volume = {38},
  number = {3},
  journal={IEEE Transactions on Knowledge and Data Engineering},
  pages = {1885--1904}
}

@inproceedings{WuHLZWL2023,
  title        = {{TimesNet}: Temporal 2D-Variation Modeling for General Time Series Analysis},
  author       = {Haixu Wu and
                  Tengge Hu and
                  Yong Liu and
                  Hang Zhou and
                  Jianmin Wang and
                  Mingsheng Long},
  booktitle    = {Proceedings of the 11th International Conference on Learning Representations},
  year         = {2023},
}

@inproceedings{ChiuKL2003,
  title={Probabilistic discovery of time series motifs},
  author={Chiu, Bill and Keogh, Eamonn and Lonardi, Stefano},
  booktitle={Proceedings of the 9th {ACM} {SIGKDD} International Conference on Knowledge Discovery and Data Mining},
  pages={493--498},
  year={2003}
}

@article{LinKWL2007,
  title={{Experiencing SAX}: a novel symbolic representation of time series},
  author={Lin, Jessica and Keogh, Eamonn and Wei, Li and Lonardi, Stefano},
  journal={Data Mining and Knowledge Discovery},
  volume={15},
  number={2},
  pages={107--144},
  year={2007},
}

@inproceedings{FranceschiDJ2019,
  author       = {Jean{-}Yves Franceschi and
                  Aymeric Dieuleveut and
                  Martin Jaggi},
  title        = {Unsupervised Scalable Representation Learning for Multivariate Time Series},
  booktitle    = {Advances in Neural Information Processing Systems},
  volume       = {32},
  pages        = {4652--4663},
  year         = {2019},
}

@inproceedings{YueWDYHTX2022,
  author       = {Zhihan Yue and
                  Yujing Wang and
                  Juanyong Duan and
                  Tianmeng Yang and
                  Congrui Huang and
                  Yunhai Tong and
                  Bixiong Xu},
  title        = {{TS2Vec}: Towards Universal Representation of Time Series},
  booktitle    = {Proceedings of the 36th {AAAI} Conference on Artificial Intelligence},
  volume       = {36},
  pages        = {8980--8987},
  year         = {2022},
}

@inproceedings{DempsterSW2021,
  author       = {Angus Dempster and
                  Daniel F. Schmidt and
                  Geoffrey I. Webb},
  title        = {{MiniRocket}: {A} Very Fast (Almost) Deterministic Transform for Time Series Classification},
  booktitle    = {Proceedings of the 27th {ACM} {SIGKDD} Conference on Knowledge Discovery and Data Mining},
  pages        = {248--257},
  year         = {2021},
}

@inproceedings{NieNSK2023,
  title        = {A Time Series is Worth 64 Words: Long-term Forecasting with Transformers},
  author       = {Yuqi Nie and
                  Nam H. Nguyen and
                  Phanwadee Sinthong and
                  Jayant Kalagnanam},
  booktitle    = {Proceedings of the 11th International Conference on Learning Representations},
  year         = {2023},
}

@inproceedings{GoswamiSCCLD2024,
  author       = {Mononito Goswami and
                  Konrad Szafer and
                  Arjun Choudhry and
                  Yifu Cai and
                  Shuo Li and
                  Artur Dubrawski},
  title        = {{MOMENT}: {A} Family of Open Time-series Foundation Models},
  booktitle    = {Proceedings of the 41st International Conference on Machine Learning},
  volume       = {235},
  pages        = {16115--16152},
  year         = {2024},
}

@inproceedings{WangLSYMLJCJ2025,
  author       = {Shiyu Wang and
                  Jiawei Li and
                  Xiaoming Shi and
                  Zhou Ye and
                  Baichuan Mo and
                  Wenze Lin and
                  Shengtong Ju and
                  Zhixuan Chu and
                  Ming Jin},
  title        = {{TimeMixer++}: {A} General Time Series Pattern Machine for Universal Predictive Analysis},
  booktitle    = {Proceedings of the 13th International Conference on Learning Representations},
  year         = {2025},
  pages = {16980--17016},
}

@inproceedings{ShiWNLYWJ2025,
  author       = {Xiaoming Shi and
                  Shiyu Wang and
                  Yuqi Nie and
                  Dianqi Li and
                  Zhou Ye and
                  Qingsong Wen and
                  Ming Jin},
  title        = {{Time-MoE}: Billion-Scale Time Series Foundation Models with Mixture of Experts},
  booktitle    = {Proceedings of the 13th International Conference on Learning Representations},
  year         = {2025},
  pages = {34635--34667},
}

@inproceedings{GrabockaSWS2014,
  author       = {Josif Grabocka and
                  Nicolas Schilling and
                  Martin Wistuba and
                  Lars Schmidt{-}Thieme},
  title        = {Learning Time-Series Shapelets},
  booktitle    = {Proceedings of the 20th {ACM} {SIGKDD} International Conference on Knowledge Discovery and Data Mining},
  pages        = {392--401},
  year         = {2014},
}

@article{KimH2025,
  author       = {Sungmin Kim and
                  Yo{-}Sub Han},
  title        = {Approximate Cartesian tree pattern matching},
  journal      = {Theoretical Computer Science},
  volume       = {1056},
  pages        = {115506},
  year         = {2025},
}
\bibliographystyle{iclr2027_conference}

\clearpage
\appendix

\section{Additional Preliminaries}
\subsection{Examples of Cartesian Tree and Cartesian Suffix Tree}
\label{CTPM:app:ssec:example_figures}
In this section,
we present an example of a Cartesian tree and a Cartesian suffix tree.

Consider a string $T=(3, 2, 7, 1, 6, 4, 5, 8)$.
Since $T[4]=1$ is the smallest, it becomes the root,
and $\ctree{T[1:3]}$ and $\ctree{T[5:8]}$ become
the left and right subtrees, respectively.
Figure~\ref{CTPM:app:fig:ctexample} visualizes the Cartesian tree
of the string~$T$.

\begin{figure}[htbp]
    \centering
    \includegraphics[page=1, width=0.7\linewidth]{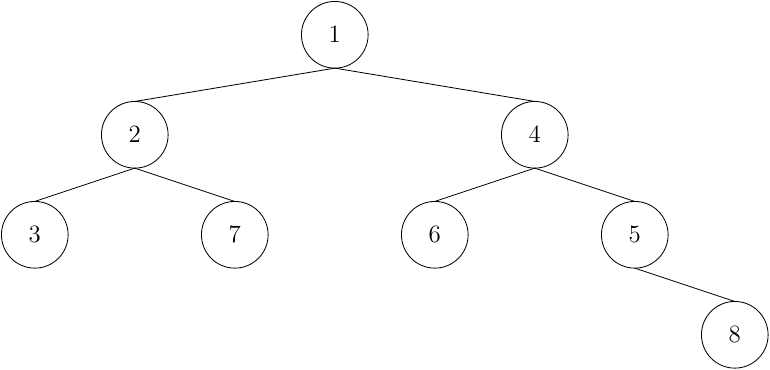}
    \caption{Cartesian tree for $T=(3, 2, 7, 1, 6, 4, 5, 8)$}
    \label{CTPM:app:fig:ctexample}
\end{figure}

For the same string~$T$, its Cartesian suffix tree
is shown in Figure~\ref{CTPM:app:fig:cstexample}.
The tree has $|T|+1=9$ leaf nodes,
where the node~$n_i$ represents
the string with suffix number~$i$;
that is, $\nodestr{n_i}=\pdrep{T[i:|T|]}\cdot(-1)$.
For the node~$m$ in Figure~\ref{CTPM:app:fig:cstexample},
we have
$\posstart{m}=2$
and $\posend{m}=4$.
$\indexof{m}$ can be one of
$1$, $3$, or $5$ depending on the implementation.
All of these values are valid since
$\suffixcol{T}[i][1:4)=(\infty,\infty,1)$
for all $i\in\{1,3,5\}$.

\begin{figure}[htbp]
    \centering
    \includegraphics[page=2, width=0.8\linewidth]{figures/figures.pdf}
    \caption{Cartesian suffix tree for $T=(3, 2, 7, 1, 6, 4, 5, 8)$}
    \label{CTPM:app:fig:cstexample}
\end{figure}

\section{Proofs of the Lemmas and Theorems}
\subsection{Proof for Lemma~\ref{CTPM:main:lem:leafcntocc}}
\begin{proof}
    For a non-root node~$n$,
    consider a leaf node~$m$
    in the subtree rooted at $n$.
    Note that $n=m$ if $n$ is already a leaf node.
    Since $n$ is an ancestor of $m$,
    $\nodestr{n}$ is a prefix of $\nodestr{m}$.
    That is, we have
    $$\suffixcol{T}[\indexof{n}][1:\posend{n})=\suffixcol{T}[\indexof{m}][1:\posend{n}).$$
    In other words, we have
    \begin{flalign*}
        \pdrep{\realstr{n}{T}}
        &=\pdrep{T[\indexof{n}:\indexof{n}+\posend{n}-1)}\\
        &=\pdrep{T[\indexof{m}:\indexof{m}+\posend{n}-1)},
    \end{flalign*}
    which implies that the position~$\indexof{m}$ is
    an occurrence of $\realstr{n}{T}$ in $T$.
    There are $\leafnum{n}$ such distinct nodes $m$,
    which proves that $\occ{\realstr{n}{T}[1:i)}{T}=\leafnum{n}$
    for $i=\posend{n}$.
    This argument can be applied to
    all $\posstart{n}+1\le i < \posend{n}$,
    since $\csuftree{T}$ is a compacted trie.
\end{proof}

\subsection{Proof for Lemma~\ref{CTPM:main:lem:recoversubstr}}
\begin{proof}
    By definition, we have
    $$
        \nodestr{n}=\suffixcol{T}[\indexof{n}][1:\posend{n})
        =\pdrep{T[\indexof{n}:|T|]}\cdot(-1)[1:\posend{n}).
    $$
    We can also directly obtain
    the number of occurrences of the string~$\realstr{n}{T}$
    by $\leafnum{n}$ (see Lemma~\ref{CTPM:main:lem:leafcntocc}).
\end{proof}

\subsection{Proof for Theorem~\ref{CTPM:main:thm:algsanity2}}
\begin{proof}
    We first prove the correctness.
    Let us denote the root node by $r$.
    Consider a leaf node~$n$ of $\csuftree{T}$,
    and let $n_0=r,n_1,\cdots,n_{k-1},n_k=n$ be the path
    from the root to the leaf~$n$.
    By Lemma~\ref{CTPM:main:lem:leafcntocc},
    we correctly report all frequencies of
    $T[\indexof{n}:\indexof{n}+i)$
    for all $1\le i <|\suffixcol{T}[\indexof{n}]|$,
    in the loop starting from line~\ref{CTPM:main:ln:a2reploop}.
    Note that we suppress the substring from being reported
    when $i=|\suffixcol{T}[\indexof{n}]|$
    by lines~\ref{CTPM:main:ln:a2supp1} and \ref{CTPM:main:ln:a2supp2},
    because such $i$ is
    out of bounds for $\realstr{n}{T}[1:i)$,
    which is caused by the arbitrary endmarker
    appended to all strings in $\suffixcol{T}$.
    This covers every substring $T[\indexof{n}:i]$
    for all $\indexof{n}\le i\le|T|$.
    Since there exists
    a leaf node~$n_i$ such that
    $\indexof{n_i}=i$ for all $1\le i\le |T|$,
    we can conclude that the algorithm covers
    every single substring
    of $T$.

    We now prove the running time.
    Preprocessing in line~\ref{CTPM:main:ln:a2pre}
    runs in $O(|T|\log |T|)$ time.
    Since the number of nodes in the tree
    is $O(|T|)$,
    obtaining the list of all nodes in line~\ref{CTPM:main:ln:a2listnodes}
    can be done in $O(|T|)$ time,
    by running DFS at the root of the tree.
    The main loop starting from line~\ref{CTPM:main:ln:a2mainloop}
    traverses all nodes in the tree,
    going through all the CT patterns in $T$
    using only $O(1)$ time for each pattern.
    There are $O(|T|^2)$ patterns,
    making the running time $O(|T|^2)$ as well.
\end{proof}

\subsection{Proof for Lemma~\ref{CTPM:main:lem:ctcount}}
\begin{proof}
    We prove this by induction.
    For base case $n=1$, we have $C_1=1$,
    as the only Cartesian tree of length $1$
    is the singleton node.
    This shows that
    the equation~(\ref{CTPM:main:eq:ctcount}) is true for the base case.

    For the inductive case,
    assume that the equation~(\ref{CTPM:main:eq:ctcount})
    holds for all $n\le k$ for some $k$.
    Now suppose we are building a Cartesian tree of
    length~$k+1$,
    whose root is at position~$1\le i\le k+1$.
    Then, since the tree is built recursively from here,
    the left subtree has a total of $C_{i-1}$ possible
    variants, and the right subtree has the total of $C_{k+1-i}$
    variants, resulting in a total of $C_{i-1}C_{k+1-i}$ possible
    trees in this case.
    This is true for all positions~$1\le i\le k+1$,
    so the total number of valid Cartesian trees is
    $$
        \sum_{i=1}^{k+1} C_{i-1}C_{k+1-i},
    $$
    which is equal to $C_{k+1}$.
    This proves the inductive case, completing the proof.
\end{proof}

\subsection{Proof for Theorem~\ref{CTPM:main:thm:labelct}}
\label{CTPM:app:ssec:thmlabelctpf}
\begin{proof}
    The idea here is to
    first split the set of trees
    by the position of the root,
    and then map each of the trees
    to each designated split area.
    We have a base case of
    $f_0(\emptystr)=0$.
    For any other value of $n$,
    given an input~$\pdrep{T}$, we first find the 
    position of the rightmost
    $\infty$ in $\pdrep{T}$,
    which corresponds to the leftmost smallest position in $T$.
    Let the position be $r$.
    Now we calculate the \emph{offset},
    which is the number of Cartesian trees
    whose root position is smaller than $r$.
    This can be computed by
    $$
        s = \sum_{i=1}^{r-1} C_{i-1}C_{n-i}.
    $$
    This value ensures that
    Cartesian trees of root position~$r$
    are mapped to their unique `area'
    of size $C_{r-1}C_{n-r}$.
    Finally, we recursively compute
    $$
        p = f_{r-1}(\pdrep{T[1:r-1]}) \cdot C_{n-r} + f_{n-r}(\pdrep{T[r+1:n]}),
    $$
    which is the position in the `area' of the tree.
    Note that $0\le p < C_{r-1}C_{n-r}$.
    We can then map $PD(T)$ to the value~$s+p$,
    which completes the function.
\end{proof}

\begin{figure}[htbp]
    \centering
    \includegraphics[page=3, width=0.85\linewidth]{figures/figures.pdf}
    \caption{An illustration of Theorem~\ref{CTPM:main:thm:labelct}}
\end{figure}

\section{Experimental Details}
\label{CTPM:app:sec:experimental_details}

For each experimental condition, we average the scores over
five clustering seeds within each dataset.
Aggregate results are then computed by averaging these
dataset-level scores with equal weight,
so that larger datasets do not dominate the reported means.

\subsection{Datasets and Metrics}
\label{CTPM:app:ssec:datasets_metrics}

We evaluate our method on six univariate time-series datasets,
Car, Beef, ElectricDevices, MoteStrain, PigCVP, and Wafer~\citep{DauBKYZGRK2019}.
Our evaluation includes datasets used in previous studies
of order-preserving pattern mining.
Car and Beef were used in the clustering experiments
of \citet{EFO}, while Wafer was used by \citet{OPST}.
For Car and Beef, we expand the evaluation from the
15 series per dataset considered by \citet{EFO}
to the complete collections obtained by combining
the official training and test splits.
We further include ElectricDevices, MoteStrain, and PigCVP
to broaden the evaluation across collection sizes
and sequence lengths.
In particular, ElectricDevices contains 16,637 series,
while PigCVP contains sequences of length 2,000.
Together, the six datasets allow us to assess clustering
quality and computational cost across different data scales.
Table~\ref{CTPM:app:tab:datasets} summarizes their characteristics.

\begin{table}[tbp]
    \centering
    \small
    \caption{Dataset characteristics after combining the official
    training and test splits.}
    \label{CTPM:app:tab:datasets}
    \begin{tabular}{lrrrl}
        \toprule[1.2pt]
        Dataset & Sequences & Length & Classes \\
        \midrule[1.2pt]
        Car & 120 & 577 & 4 \\
        Beef & 60 & 470 & 5 \\
        ElectricDevices & 16,637 & 96 & 7 \\
        MoteStrain & 1,272 & 84 & 2 \\
        PigCVP & 312 & 2,000 & 52 \\
        Wafer & 7,164 & 152 & 2 \\
        \bottomrule[1.2pt]
    \end{tabular}
\end{table}

For each dataset, we combine the official training and test splits
into a single collection for unsupervised clustering.
The representations are constructed and evaluated on this collection.
Ground-truth labels are used only to determine the number of clusters
and evaluate the resulting assignments.
They are not used for pattern selection or representation fitting.
This protocol evaluates in-sample clustering utility.

We assess clustering quality using normalized mutual information~(NMI),
homogeneity~($h$), and the adjusted Rand index~(ARI).
Let $Y$ denote the ground-truth class labels and $Z$ denote
the predicted cluster assignments for the $S$ time series.

Following the definitions used by \citet{SOPP},
we compute NMI with geometric normalization as
\begin{equation}
    \operatorname{NMI}(Y,Z)
    =
    \frac{I(Y,Z)}{\sqrt{H(Y)H(Z)}},
\end{equation}
where $I(Y,Z)$ is the mutual information between the two assignments,
and $H(Y)$ and $H(Z)$ are their respective entropies.
NMI measures the information shared by the predicted clusters
and the ground-truth classes.

Homogeneity measures the extent to which each cluster contains
series from a single ground-truth class and is defined as
\begin{equation}
    h(Y,Z)
    =
    1-\frac{H(Y\mid Z)}{H(Y)},
\end{equation}
where $H(Y\mid Z)$ is the conditional entropy of the class labels
given the cluster assignments.
A homogeneity score of one indicates that every cluster
contains series from only one class.

We additionally report ARI to evaluate pairwise agreement
while adjusting for chance.
Let $a$ denote the number of pairs assigned to the same group
in both $Y$ and $Z$, and let $b$ denote the number of pairs
assigned to different groups in both assignments.
The Rand index is $\operatorname{RI}(Y,Z)=(a+b)/\binom{S}{2}$,
and its adjusted form is
\begin{equation}
    \operatorname{ARI}(Y,Z)
    =
    \frac{
        \operatorname{RI}(Y,Z)
        -\mathbb{E}[\operatorname{RI}(Y,Z)]
    }{
        1-\mathbb{E}[\operatorname{RI}(Y,Z)]
    }.
\end{equation}
The expectation is taken over random assignments
with the class and cluster sizes held fixed.
ARI equals one for identical partitions and has an expected
value of zero under this random-assignment model.
Negative values indicate agreement below chance.
Higher values indicate better clustering quality
for all three metrics.

\subsection{Baseline Implementation Details}
\label{CTPM:app:ssec:baseline-implementations}

We compare the proposed method with Raw, PCA, OPP, EFO, OPR, 
OPF, SOPP, MOPP, and OPST.
Raw and PCA construct features directly from numerical observations.
The remaining baselines represent each series through recurring
relative-order patterns in contiguous subsequences.
The methods differ in how they discover patterns
and which patterns they retain.
We implement all pattern miners within a common Java framework,
using existing Java implementations or ports of released code
as described below.

\paragraph{Raw and PCA.}
Raw uses the observations at individual time positions as features.
A series of length $n$ therefore produces an $n$-dimensional vector.
PCA constructs a lower-dimensional representation by finding
orthogonal directions that capture the greatest variance
across the standardized input series.
Each feature is the projection of a series onto one
of the retained directions.
We compute these directions using randomized
singular value decomposition.

\paragraph{OPP and EFO.}
OPP-Miner~\citep{OPP} and EFO-Miner~\citep{EFO}
retain relative-order patterns that meet a minimum support threshold.
Both generate longer candidates by combining shorter patterns
with compatible overlapping orders.
OPP filters candidate occurrences using adjacent increases
and decreases, then verifies their full relative order.
EFO combines occurrence-position lists of shorter patterns
and prunes candidates that cannot meet the support threshold.
Each selected pattern defines an occurrence-count feature.
The original implementations are in C++.
We use the Java reimplementations released
with OPF-Miner~\citep{OPF}.

\paragraph{OPR.}
OPR-Miner~\citep{EFO} selects feature patterns 
through rules $p \rightarrow q$, 
where $q$ extends $p$ one position to the right.
A rule qualifies when $q$ meets the support threshold and 
its support divided by that of $p$ meets the confidence threshold.
The distinct patterns appearing on either side of qualifying rules
form the feature dictionary.
Each feature records a pattern's occurrence count in a series.
Our Java implementation combines EFO occurrence counting
with the rule-selection procedure from the released C++ code.\footnote{
\url{https://github.com/wuc567/Pattern-Mining/tree/master/OPR-Miner}}

\paragraph{OPF.}
OPF-Miner~\citep{OPF} weights relative-order pattern occurrences
more heavily near the end of each series.
An occurrence ending at position $e$ in a series of length $n$
receives weight $\exp(-\alpha(n-e)/n)$,
where $\alpha$ controls the decay rate.
These weights are summed within each series
to determine both mining support and feature values.
Patterns are retained when their weighted support
meets the minimum threshold.
We adapt the released Java implementation,\footnote{
\url{https://github.com/wuc567/Pattern-Mining/tree/master/OPF-Miner}}
retaining its candidate-generation and support-counting procedures.

\paragraph{SOPP.}
SOPP-Miner~\citep{SOPP} selects feature patterns
based on both occurrence frequency and spacing.
Within each series, it computes the gaps between
consecutive occurrences of a pattern.
The standard deviation of these gaps divided by their mean
measures how much the spacing varies.
A pattern qualifies when its occurrence count meets
the support threshold and this variation
does not exceed the stability threshold.
Patterns are ranked by the number of series
in which they qualify.
Once selected, each pattern defines a feature
whose value is its occurrence count in each series.
We implement a Java port of the released Python code\footnote{
\url{https://github.com/wuc567/Pattern-Mining/tree/master/SOPP-Miner}}.

\paragraph{MOPP.}
MOPP-Miner~\citep{OPP} selects frequent patterns
that have no frequent extension.
It builds longer candidates from shorter patterns
and checks their occurrence counts.
When a longer pattern meets the support threshold,
its immediate prefix and suffix patterns
are excluded from the output because they have
a frequent extension.
The remaining patterns form the feature dictionary,
and each feature records a pattern's occurrence count
in a series.
Our Java port follows this selection procedure
from the released C++ implementation
and uses OPP for occurrence detection.\footnote{
\url{https://github.com/wuc567/Pattern-Mining/tree/master/OPP-Miner}}

\paragraph{OPST.}
OPST~\citep{OPST} uses a suffix-tree index
to organize subsequences by their relative order
and obtain their occurrence counts.
Subsequences with the same relative-order pattern
are represented together in the index,
allowing their counts to be computed from shared structure.
The mining procedure selects frequent patterns
for which no extension to the left or right
meets the support threshold.
These patterns form the feature dictionary,
with occurrence counts as feature values.
We port the released C++ index-construction
and maximal-pattern mining procedures to Java.\footnote{
\url{https://github.com/LINGLI97/OPST}}
Its order representation distinguishes equal values
from strictly ordered values,
allowing patterns to include repeated values.

\paragraph{Adaptation to time-series collections.}
For each method, we collect the distinct patterns
mined from individual series and use them as feature dimensions.
Each series is then mapped to a vector
whose entries record the occurrences of these patterns.
OPF uses forgetting-weighted counts,
while the other pattern-mining methods use ordinary counts.
This mapping gives each feature dimension
a consistent meaning across series within the same method.
For experiments with a fixed feature budget,
$K$ patterns are selected after mining
according to the method-specific ranking criterion.

\subsection{Comparison with Naive Algorithm}
We use CT-Hash to evaluate the benefit of reusing
structural information during CT mining.
CT-Hash is implemented in Java and processes
each contiguous subsequence independently.
For each subsequence, it uses a monotone stack
to construct a code representing its Cartesian-tree structure.
Identical codes are grouped in a hash table
to count occurrences of the corresponding pattern.
Exact code comparisons ensure that hash collisions
do not merge different patterns.
Unlike CT, which reuses its trie index across pattern lengths,
CT-Hash constructs these codes independently
for every subsequence at each evaluated length.
Both implementations use the same pattern-selection criteria
and produce occurrence-count features.
We verify that their selected patterns,
support values, and feature matrices agree for each condition.

\subsection{Hyperparameter Settings}
\label{CTPM:app:ssec:hyperparameters}

For the feature-budget comparison, we use
$K\in\{10,20,30,40,50\}$.
Each pattern-mining method returns its top-$K$ qualifying patterns,
while PCA retains $K$ components.
Raw uses all original observations.
We set the minimum support to $2$,
the OPR confidence threshold to $0.5$,
and the SOPP maximum gap coefficient of variation to $1.0$.
The confidence and stability thresholds are fixed
at the middle values of their respective parameter grids
before evaluating clustering quality.

Support thresholds are applied separately within each series.
For OPF, support is the sum of occurrence weights,
with the decay parameter fixed at $\alpha=1$.
For the other miners, support is the occurrence count.
SOPP requires at least one gap between consecutive occurrences
to evaluate stability.
All miners may consider pattern lengths up to the input length,
subject to their native pruning and termination rules.
CT explores pattern lengths incrementally.
In the fixed-budget experiments, CT safely terminates
when no deeper pattern can enter the current top-$K$ dictionary,
using the fact that support cannot increase under extension.
It also terminates when no pattern at the current length
meets the support threshold.
Thus, the feature budget determines the final dictionary size
and can also reduce the maximum depth explored by CT
through this safe termination criterion.

To examine parameter sensitivity, we vary
the minimum support over $\{5,10,25\}$.
For OPR, each support threshold is combined with
confidence thresholds in $\{0.25,0.5,0.75\}$.
For SOPP, each support threshold is combined with
maximum gap coefficients of variation in $\{0.5,1.0,1.5\}$.
In these experiments, all qualifying patterns are used
as features without a fixed feature budget.
Raw and PCA with 30 components serve as reference representations.
Results are reported separately for each parameter setting.

For the input-length comparison, we retain the first
$20\%,40\%,60\%,80\%$, and $100\%$ of each series
while keeping the number of series fixed.
For an original series of length $n$,
the retained length is $\lfloor pn\rfloor$,
where $p\in\{0.2,0.4,0.6,0.8,1.0\}$.
We use $K=30$ and retain the support, confidence,
stability, and decay settings of the feature-budget comparison.
Mining considers pattern lengths up to the retained input length.
For clustering, we use $k$-means with Lloyd's algorithm
and ten initializations per run.
The number of clusters is set to the number
of ground-truth classes in each dataset.
PCA uses randomized singular value decomposition
with random seed zero and no whitening.

The separate comparison with naive implementation uses $K=30$.
At the original input length, we vary the maximum pattern length
over $L\in\{8,16,32,64,128,256\}$,
excluding settings where $L>n$.
In the input-length sweep, we fix $L=16$
and use the same five prefix proportions.
Both implementations process every length from $2$ to $L$
with early stopping disabled.

\subsection{Evaluation Protocols}
\label{CTPM:app:ssec:evaluation-protocols}

We combine the official training and test splits
of each dataset for unsupervised clustering.
Feature construction and clustering use the same collection.
Ground-truth labels are used only to determine
the number of clusters and evaluate cluster assignments.
All methods receive the same input values,
rounded to nine significant digits,
without extreme-point extraction.
Pattern occurrences are confined to individual series.
OPP, EFO, OPF, OPR, MOPP, SOPP, and CT exclude
windows containing repeated values,
while OPST retains its equality-aware order representation.

Each feature-budget, threshold, and input-length condition
is evaluated independently.
Features are reconstructed for each condition,
including each input prefix.
Before clustering, each column of the Raw and pattern-based
feature matrices is standardized across samples.
PCA is fitted to standardized raw inputs,
and its component scores are used without further standardization.
For each condition, we evaluate the constructed feature matrix
using five $k$-means seeds $\{0,1,2,3,4\}$.
We average NMI, homogeneity, and ARI over these seeds
within each dataset.
Aggregate scores are computed by averaging
the dataset-level scores with equal weight.

We measure feature-construction time and peak memory usage
over five independent runs and report their means.
Feature-construction time includes mining or PCA fitting,
feature generation, and invocation overhead,
but excludes clustering.
Memory is measured as the peak sampled RSS
of the worker process and its descendants
during feature construction.
CT and CT-Hash are evaluated under identical mining settings.
We verify that both produce the same selected patterns,
support values, and feature matrices.
Both methods are executed serially.
Speedup is the ratio of mean CT-Hash mining time
to mean CT mining time, excluding JVM startup
and Python communication.

\subsection{Experimental Environments}
\label{CTPM:app:ssec:environment}

Experiments were conducted on a server running Rocky Linux 8.10
with an AMD Ryzen Threadripper 3960X processor
containing 24 physical cores and 48 hardware threads,
and approximately 256 GiB of system memory.
We used Python 3.11.16 with NumPy 2.1.3,
SciPy 1.15.3, and scikit-learn 1.6.1.
We used aeon~1.3.0 to load the time-series datasets.
Java implementations were compiled and executed
using OpenJDK 25.0.2-internal.
Python numerical libraries were limited to one thread,
and each JVM was configured with one active processor
and a maximum heap size of 16\,GiB.

\section{Additional Experimental Results}
\label{CTPM:app:sec:additional_experimental_results}

\subsection{Full Results on Quality and Resources with Increasing Features}

Tables~\ref{CTPM:app:tab:budget_quality} and~\ref{CTPM:app:tab:budget_resources}
provide the complete dataset-level results across feature budgets.
Unlike the aggregate results in the main text, these tables expose
how the effect of $K$ varies across datasets and evaluation metrics.
Quality values are averaged over five clustering seeds, while
resource measurements are averaged over five independent constructions.
Raw is included as a fixed reference and excluded from the rankings.
Bold and underlined values denote the best and second-best results,
respectively, and CT cells are shaded gray.

The full results reveal that the effect of feature budget is strongly
dataset dependent.
On Car, CT produces the same clustering scores as OPP, EFO, and OPR
across all evaluated budgets, while on Beef their results remain close
with OPF performing better at several intermediate budgets.
The differences are more pronounced on the other datasets.
On MoteStrain, CT remains among the strongest methods across most budgets,
whereas on PigCVP its quality improves substantially from $K=10$ to $K=30$
and remains close to the strongest methods at larger budgets.
ElectricDevices exhibits a different pattern, with CT performing strongly
at small $K$ but gradually decreasing as additional features are included.
OPST also achieves substantially higher ARI than the frequency-based
representations on this dataset despite more moderate NMI differences.
These dataset-level results clarify the non-monotonic trend observed
in the aggregate analysis.
Increasing $K$ can improve, preserve, or reduce clustering quality
depending on the dataset and representation.
The benefit of a compact dictionary therefore does not arise from a
single preferred budget, but from selecting structural features that
remain useful under limited representation capacity.

The resource results show a more consistent pattern for runtime.
CT is the fastest pattern-mining method for every dataset and every
evaluated feature budget, and its advantage remains substantial as
$K$ increases.
In contrast, the memory behavior is more dataset dependent.
CT maintains a much smaller footprint than the OP-based miners on
Car and Beef, and remains lower on MoteStrain despite increasing with $K$.
On PigCVP, its memory usage grows from well below the OP-based methods
at small budgets to a comparable or larger footprint at larger budgets.
ElectricDevices shows the clearest exception, where CT uses more memory
than the other pattern miners throughout the evaluated range.
These results separate the two computational effects of the proposed method.
Its runtime advantage is consistent across both datasets and feature budgets,
whereas its memory advantage depends on the amount of shared Cartesian
structure retained during mining.
The latter motivates the explicit time-space analysis against the naive
CT implementation reported separately.

\begin{table}[th]\centering
\caption{
Clustering quality across feature budgets on the five main datasets.
Values are mean NMI, homogeneity, and ARI over five clustering seeds.
Bold and underline denote the best and second-best values excluding Raw.
}
\label{CTPM:app:tab:budget_quality}
\small
\setlength{\tabcolsep}{2.4pt}
\begin{tabular}{llrrrrrrrrrrrrrrr}
\toprule[1.2pt]
\multirow{2}{*}[-2pt]{Dataset} & \multirow{2}{*}[-2pt]{Method} & \multicolumn{3}{c}{$K=10$} & \multicolumn{3}{c}{$K=20$} & \multicolumn{3}{c}{$K=30$} & \multicolumn{3}{c}{$K=40$} & \multicolumn{3}{c}{$K=50$} \\
\cmidrule(lr){3-5}\cmidrule(lr){6-8}\cmidrule(lr){9-11}\cmidrule(lr){12-14}\cmidrule(lr){15-17}
 &  & NMI & $h$ & ARI & NMI & $h$ & ARI & NMI & $h$ & ARI & NMI & $h$ & ARI & NMI & $h$ & ARI \\
\midrule[1.2pt]
\multirow{10}{*}{Car} & Raw & 22.8 & 21.2 & 12.6 & 22.8 & 21.2 & 12.6 & 22.8 & 21.2 & 12.6 & 22.8 & 21.2 & 12.6 & 22.8 & 21.2 & 12.6 \\
\cmidrule{2-17}
 & PCA & 23.2 & 21.7 & 13.1 & \textbf{22.8} & \textbf{21.2} & 12.6 & \textbf{22.8} & \textbf{21.2} & 12.6 & \textbf{22.5} & \textbf{20.9} & 12.4 & \textbf{22.8} & \underline{21.2} & 12.6 \\
 & OPP & \underline{25.1} & \underline{24.8} & \textbf{17.1} & \underline{21.5} & \textbf{21.2} & \textbf{16.0} & \underline{20.8} & \underline{20.3} & \textbf{15.5} & 18.2 & 18.2 & \textbf{15.0} & 19.5 & 19.4 & \textbf{16.6} \\
 & EFO & \underline{25.1} & \underline{24.8} & \textbf{17.1} & \underline{21.5} & \textbf{21.2} & \textbf{16.0} & \underline{20.8} & \underline{20.3} & \textbf{15.5} & 18.2 & 18.2 & \textbf{15.0} & 19.5 & 19.4 & \textbf{16.6} \\
 & OPF & \textbf{25.2} & \textbf{24.9} & \underline{15.5} & 15.0 & 14.6 & 8.5 & 15.0 & 14.2 & 8.3 & 16.4 & 16.3 & 14.0 & 18.7 & 18.6 & \underline{16.2} \\
 & OPR & \underline{25.1} & \underline{24.8} & \textbf{17.1} & \underline{21.5} & \textbf{21.2} & \textbf{16.0} & \underline{20.8} & \underline{20.3} & \textbf{15.5} & 18.2 & 18.2 & \textbf{15.0} & 19.5 & 19.4 & \textbf{16.6} \\
 & MOPP & 14.1 & 13.9 & 10.0 & 18.1 & \underline{17.9} & \underline{12.7} & 19.8 & 19.4 & \underline{13.9} & \underline{21.6} & \underline{20.8} & \underline{14.2} & \underline{22.6} & \textbf{21.7} & 15.1 \\
 & OPST & 14.1 & 13.9 & 10.0 & 18.1 & \underline{17.9} & \underline{12.7} & 19.8 & 19.4 & \underline{13.9} & \underline{21.6} & \underline{20.8} & \underline{14.2} & \underline{22.6} & \textbf{21.7} & 15.1 \\
 & SOPP & 11.6 & 11.4 & 6.8 & 11.4 & 11.2 & 7.0 & 11.2 & 11.1 & 6.7 & 14.1 & 13.9 & 9.3 & 12.6 & 12.4 & 8.0 \\
 & \CCG CT & \CCG \underline{25.1} & \CCG \underline{24.8} & \CCG \textbf{17.1} & \CCG \underline{21.5} & \CCG \textbf{21.2} & \CCG \textbf{16.0} & \CCG \underline{20.8} & \CCG \underline{20.3} & \CCG \textbf{15.5} & \CCG 18.2 & \CCG 18.2 & \CCG \textbf{15.0} & \CCG 19.5 & \CCG 19.4 & \CCG \textbf{16.6} \\
\midrule
\multirow{10}{*}{Beef} & Raw & 30.3 & 27.3 & 8.4 & 30.3 & 27.3 & 8.4 & 30.3 & 27.3 & 8.4 & 30.3 & 27.3 & 8.4 & 30.3 & 27.3 & 8.4 \\
\cmidrule{2-17}
 & PCA & \textbf{30.3} & 27.3 & \underline{8.4} & \underline{30.3} & 27.3 & 8.4 & \underline{30.3} & 27.3 & 8.4 & 30.3 & 27.3 & 8.4 & \textbf{30.3} & 27.3 & 8.4 \\
 & OPP & \underline{29.9} & \textbf{29.5} & \textbf{13.1} & 29.9 & \underline{29.0} & \underline{14.0} & 28.8 & \underline{27.8} & \underline{13.7} & 30.4 & \underline{29.6} & \underline{13.5} & 29.9 & \underline{29.0} & 12.6 \\
 & EFO & \underline{29.9} & \textbf{29.5} & \textbf{13.1} & 29.9 & \underline{29.0} & \underline{14.0} & 28.8 & \underline{27.8} & \underline{13.7} & 30.4 & \underline{29.6} & \underline{13.5} & 29.9 & \underline{29.0} & 12.6 \\
 & OPF & 29.7 & \underline{28.9} & \textbf{13.1} & \textbf{33.2} & \textbf{31.9} & \textbf{16.5} & \textbf{33.0} & \textbf{31.8} & \textbf{16.2} & \textbf{32.8} & \textbf{31.9} & \textbf{15.2} & \underline{30.1} & \textbf{29.2} & \textbf{13.7} \\
 & OPR & \underline{29.9} & \textbf{29.5} & \textbf{13.1} & 29.9 & \underline{29.0} & \underline{14.0} & 28.8 & \underline{27.8} & \underline{13.7} & 30.4 & \underline{29.6} & \underline{13.5} & 29.9 & \underline{29.0} & 12.6 \\
 & MOPP & 21.9 & 20.3 & 7.0 & 23.0 & 21.5 & 8.7 & 21.9 & 20.5 & 7.9 & 18.7 & 17.4 & 5.7 & 15.2 & 14.9 & 3.6 \\
 & OPST & 21.9 & 20.3 & 7.0 & 23.0 & 21.5 & 8.7 & 21.9 & 20.5 & 7.9 & 18.7 & 17.4 & 5.7 & 15.2 & 14.9 & 3.6 \\
 & SOPP & 13.3 & 13.0 & 3.2 & 12.8 & 12.7 & 1.3 & 14.0 & 13.9 & 1.7 & 11.4 & 11.3 & 0.6 & 9.7 & 9.6 & -0.7 \\
 & \CCG CT & \CCG \underline{29.9} & \CCG \textbf{29.5} & \CCG \textbf{13.1} & \CCG 29.9 & \CCG \underline{29.0} & \CCG \underline{14.0} & \CCG 28.8 & \CCG \underline{27.8} & \CCG \underline{13.7} & \CCG \underline{30.6} & \CCG \underline{29.6} & \CCG 13.4 & \CCG 29.5 & \CCG 28.5 & \CCG \underline{12.8} \\
\midrule
\multirow{10}{*}{\begin{tabular}[t]{@{}l@{}}Electric\\Devices\end{tabular}} & Raw & 19.9 & 19.3 & 16.8 & 19.9 & 19.3 & 16.8 & 19.9 & 19.3 & 16.8 & 19.9 & 19.3 & 16.8 & 19.9 & 19.3 & 16.8 \\
\cmidrule{2-17}
 & PCA & 20.9 & 20.2 & \textbf{20.1} & 20.3 & 19.5 & \underline{18.5} & 20.1 & 19.4 & \underline{17.9} & 20.3 & 19.3 & \underline{19.8} & 20.2 & 19.4 & \underline{18.1} \\
 & OPP & \underline{33.3} & \textbf{30.7} & 12.2 & \textbf{31.7} & \underline{28.6} & 11.1 & 26.0 & 22.2 & 7.2 & 25.0 & 21.1 & 6.8 & 23.7 & 19.5 & 6.2 \\
 & EFO & \underline{33.3} & \textbf{30.7} & 12.2 & \textbf{31.7} & \underline{28.6} & 11.1 & 26.0 & 22.2 & 7.2 & 25.0 & 21.1 & 6.8 & 23.7 & 19.5 & 6.2 \\
 & OPF & 31.9 & 29.8 & 11.7 & \underline{30.1} & 27.3 & 10.1 & 25.2 & 21.5 & 6.7 & 24.0 & 20.2 & 6.4 & 22.2 & 18.5 & 5.5 \\
 & OPR & \underline{33.3} & \textbf{30.7} & 12.2 & \textbf{31.7} & \underline{28.6} & 11.1 & 26.0 & 22.2 & 7.2 & 25.0 & 21.1 & 6.8 & 23.7 & 19.5 & 6.2 \\
 & MOPP & \underline{33.3} & \textbf{30.7} & 12.2 & \textbf{31.7} & \underline{28.6} & 11.1 & 26.0 & 22.2 & 7.2 & 25.0 & 21.1 & 6.8 & 23.7 & 19.5 & 6.2 \\
 & OPST & 28.7 & 28.9 & \underline{16.5} & 28.8 & \textbf{28.7} & \textbf{20.4} & \textbf{28.6} & \textbf{28.8} & \textbf{21.0} & \textbf{28.6} & \textbf{28.9} & \textbf{21.3} & \textbf{28.2} & \textbf{28.5} & \textbf{22.4} \\
 & SOPP & \underline{33.3} & \textbf{30.7} & 12.2 & 27.2 & 23.6 & 7.6 & 26.0 & 22.2 & 7.2 & 24.4 & 20.5 & 6.5 & 23.1 & 18.9 & 6.1 \\
 & \CCG CT & \CCG \textbf{33.4} & \CCG \underline{29.9} & \CCG 13.4 & \CCG 27.3 & \CCG 23.4 & \CCG 7.7 & \CCG \underline{26.6} & \CCG \underline{22.6} & \CCG 7.4 & \CCG \underline{25.6} & \CCG \underline{21.7} & \CCG 7.2 & \CCG \underline{24.3} & \CCG \underline{20.1} & \CCG 6.3 \\
\midrule
\multirow{10}{*}{MoteStrain} & Raw & 40.9 & 40.9 & 51.0 & 40.9 & 40.9 & 51.0 & 40.9 & 40.9 & 51.0 & 40.9 & 40.9 & 51.0 & 40.9 & 40.9 & 51.0 \\
\cmidrule{2-17}
 & PCA & 40.8 & 40.8 & 50.9 & 40.7 & 40.8 & 50.9 & 40.7 & 40.8 & \textbf{50.9} & 40.7 & 40.7 & \textbf{50.8} & \underline{40.8} & \textbf{40.8} & \textbf{50.9} \\
 & OPP & \textbf{53.6} & \textbf{53.7} & \textbf{64.3} & \underline{50.8} & \underline{50.8} & \underline{58.1} & \underline{47.1} & \underline{46.5} & 48.0 & \underline{43.7} & \underline{42.7} & 41.5 & 39.8 & 38.5 & 35.6 \\
 & EFO & \textbf{53.6} & \textbf{53.7} & \textbf{64.3} & \underline{50.8} & \underline{50.8} & \underline{58.1} & \underline{47.1} & \underline{46.5} & 48.0 & \underline{43.7} & \underline{42.7} & 41.5 & 39.8 & 38.5 & 35.6 \\
 & OPF & 45.4 & 45.4 & 56.0 & 45.8 & 45.8 & 53.4 & 42.3 & 41.8 & 44.0 & 39.4 & 38.1 & 35.5 & 36.0 & 34.3 & 29.3 \\
 & OPR & \textbf{53.6} & \textbf{53.7} & \textbf{64.3} & \underline{50.8} & \underline{50.8} & \underline{58.1} & \underline{47.1} & \underline{46.5} & 48.0 & \underline{43.7} & \underline{42.7} & 41.5 & 39.8 & 38.5 & 35.6 \\
 & MOPP & \textbf{53.6} & \textbf{53.7} & \textbf{64.3} & \underline{50.8} & \underline{50.8} & \underline{58.1} & \underline{47.1} & \underline{46.5} & 48.0 & \underline{43.7} & \underline{42.7} & 41.5 & 39.8 & 38.5 & 35.6 \\
 & OPST & \underline{53.4} & \underline{53.5} & \underline{62.1} & 49.1 & 48.8 & 52.7 & 45.5 & 44.8 & 45.7 & 42.9 & 41.8 & 40.3 & 34.3 & 32.0 & 24.1 \\
 & SOPP & 12.7 & 12.7 & 16.2 & 31.1 & 29.8 & 27.6 & 35.5 & 33.9 & 29.7 & 34.2 & 32.3 & 26.6 & 32.9 & 30.9 & 24.7 \\
 & \CCG CT & \CCG \textbf{53.6} & \CCG \textbf{53.7} & \CCG \textbf{64.3} & \CCG \textbf{51.7} & \CCG \textbf{51.7} & \CCG \textbf{59.3} & \CCG \textbf{47.9} & \CCG \textbf{47.3} & \CCG \underline{49.1} & \CCG \textbf{44.2} & \CCG \textbf{43.3} & \CCG \underline{42.3} & \CCG \textbf{40.9} & \CCG \underline{39.7} & \CCG \underline{37.3} \\
\midrule
\multirow{10}{*}{PigCVP} & Raw & 55.9 & 52.6 & 5.6 & 55.9 & 52.6 & 5.6 & 55.9 & 52.6 & 5.6 & 55.9 & 52.6 & 5.6 & 55.9 & 52.6 & 5.6 \\
\cmidrule{2-17}
 & PCA & 58.2 & 56.6 & 5.5 & 58.2 & 55.8 & 6.4 & 57.6 & 55.0 & 6.0 & 57.1 & 54.0 & 5.9 & 56.8 & 53.6 & 6.0 \\
 & OPP & \underline{59.1} & \textbf{57.7} & \underline{6.2} & 62.4 & \underline{60.9} & 10.3 & \underline{63.8} & \underline{62.1} & 12.9 & \textbf{66.5} & \textbf{64.9} & \textbf{16.9} & \textbf{67.2} & \textbf{65.8} & \textbf{18.1} \\
 & EFO & \underline{59.1} & \textbf{57.7} & \underline{6.2} & 62.4 & \underline{60.9} & 10.3 & \underline{63.8} & \underline{62.1} & 12.9 & \textbf{66.5} & \textbf{64.9} & \textbf{16.9} & \textbf{67.2} & \textbf{65.8} & \textbf{18.1} \\
 & OPF & 58.9 & \textbf{57.7} & 5.8 & 61.2 & 59.5 & 9.0 & 63.6 & 62.0 & 12.4 & 65.8 & 64.3 & 15.5 & 66.3 & 64.6 & 16.3 \\
 & OPR & \underline{59.1} & \textbf{57.7} & \underline{6.2} & 62.4 & \underline{60.9} & 10.3 & \underline{63.8} & \underline{62.1} & 12.9 & \textbf{66.5} & \textbf{64.9} & \textbf{16.9} & \textbf{67.2} & \textbf{65.8} & \textbf{18.1} \\
 & MOPP & \textbf{59.2} & 56.0 & \textbf{7.8} & \underline{62.6} & 60.8 & \underline{11.2} & \underline{63.8} & 61.8 & \underline{13.4} & 64.1 & 62.5 & 13.0 & 65.6 & 63.7 & 15.7 \\
 & OPST & \textbf{59.2} & 56.0 & \textbf{7.8} & \underline{62.6} & 60.8 & \underline{11.2} & \underline{63.8} & 61.8 & \underline{13.4} & 64.1 & 62.5 & 13.0 & 65.6 & 63.7 & 15.7 \\
 & SOPP & 58.5 & 57.1 & 5.6 & 58.9 & 57.6 & 5.7 & 58.3 & 56.8 & 5.3 & 58.6 & 57.1 & 5.9 & 58.9 & 57.3 & 6.4 \\
 & \CCG CT & \CCG 58.6 & \CCG \underline{57.2} & \CCG 5.7 & \CCG \textbf{63.1} & \CCG \textbf{61.5} & \CCG \textbf{11.5} & \CCG \textbf{64.7} & \CCG \textbf{63.1} & \CCG \textbf{14.1} & \CCG \underline{66.0} & \CCG \underline{64.4} & \CCG \underline{15.9} & \CCG \underline{66.7} & \CCG \underline{65.2} & \CCG \underline{17.2} \\
\bottomrule[1.2pt]
\end{tabular}
\end{table}

\begin{table}[th]\centering
\caption{
Feature-construction time and peak memory across feature budgets
on the five main datasets.
Time is averaged over five runs and memory is measured as peak process RSS.
Bold and underline denote the lowest and second-lowest values excluding Raw.
}
\label{CTPM:app:tab:budget_resources}
\small
\setlength{\tabcolsep}{4.5pt}
\begin{tabular}{llrrrrrrrrrr}
\toprule[1.2pt]
\multirow{2}{*}[-2pt]{Dataset} & \multirow{2}{*}[-2pt]{Method} & \multicolumn{2}{c}{$K=10$} & \multicolumn{2}{c}{$K=20$} & \multicolumn{2}{c}{$K=30$} & \multicolumn{2}{c}{$K=40$} & \multicolumn{2}{c}{$K=50$} \\
\cmidrule(lr){3-4}\cmidrule(lr){5-6}\cmidrule(lr){7-8}\cmidrule(lr){9-10}\cmidrule(lr){11-12}
 &  & Time. & Mem. & Time. & Mem. & Time. & Mem. & Time. & Mem. & Time. & Mem. \\
\midrule[1.2pt]
\multirow{10}{*}{Car} & Raw & 0.000 & 158.4 & 0.000 & 158.4 & 0.000 & 158.4 & 0.000 & 158.4 & 0.000 & 158.4 \\
\cmidrule{2-12}
 & PCA & \textbf{0.199} & \underline{163.7} & \textbf{0.201} & \underline{159.4} & \textbf{0.205} & \underline{163.5} & \textbf{0.208} & \textbf{161.2} & \textbf{0.203} & \textbf{159.7} \\
 & OPP & 20.57 & 650.0 & 20.82 & 653.0 & 20.78 & 650.7 & 20.36 & 648.0 & 20.60 & 651.0 \\
 & EFO & 17.07 & 636.7 & 17.01 & 639.2 & 17.09 & 636.0 & 17.27 & 636.5 & 17.22 & 635.1 \\
 & OPF & 15.57 & 633.4 & 15.69 & 633.5 & 15.05 & 632.7 & 15.72 & 632.1 & 15.67 & 634.6 \\
 & OPR & 17.54 & 639.9 & 17.53 & 639.3 & 17.76 & 642.3 & 17.66 & 640.8 & 17.92 & 639.9 \\
 & MOPP & 18.68 & 644.6 & 18.73 & 645.1 & 18.81 & 640.1 & 18.61 & 641.7 & 18.72 & 639.2 \\
 & OPST & 14.09 & 628.8 & 14.23 & 625.7 & 13.90 & 619.2 & 14.12 & 625.8 & 13.82 & 628.5 \\
 & SOPP & 15.81 & 626.9 & 15.64 & 631.9 & 15.71 & 634.0 & 15.94 & 630.7 & 16.03 & 632.2 \\
 & \CCG CT & \CCG \underline{0.342} & \CCG \textbf{159.0} & \CCG \underline{0.442} & \CCG \textbf{158.1} & \CCG \underline{0.543} & \CCG \textbf{158.8} & \CCG \underline{0.633} & \CCG \underline{184.3} & \CCG \underline{0.753} & \CCG \underline{228.3} \\
\midrule
\multirow{10}{*}{Beef} & Raw & 0.000 & 156.4 & 0.000 & 156.4 & 0.000 & 156.4 & 0.000 & 156.4 & 0.000 & 156.4 \\
\cmidrule{2-12}
 & PCA & \textbf{0.003} & \underline{162.4} & \textbf{0.004} & \underline{163.3} & \textbf{0.004} & \underline{161.2} & \textbf{0.005} & \underline{158.0} & \textbf{0.005} & \underline{158.3} \\
 & OPP & 2.94 & 640.1 & 3.04 & 641.2 & 3.05 & 639.5 & 3.04 & 645.5 & 3.04 & 637.9 \\
 & EFO & 2.58 & 620.7 & 2.55 & 613.9 & 2.57 & 620.8 & 2.58 & 621.6 & 2.52 & 621.8 \\
 & OPF & 2.01 & 622.0 & 2.05 & 619.7 & 2.05 & 620.5 & 2.04 & 624.1 & 2.03 & 618.0 \\
 & OPR & 2.68 & 621.5 & 2.74 & 622.1 & 2.68 & 622.7 & 2.70 & 622.0 & 2.71 & 621.2 \\
 & MOPP & 2.73 & 651.2 & 2.71 & 636.4 & 2.70 & 635.3 & 2.82 & 635.7 & 2.68 & 641.3 \\
 & OPST & 1.69 & 619.2 & 1.68 & 615.6 & 1.65 & 618.7 & 1.65 & 618.0 & 1.67 & 635.1 \\
 & SOPP & 2.04 & 617.5 & 2.04 & 620.5 & 2.10 & 619.5 & 2.06 & 625.0 & 2.03 & 618.3 \\
 & \CCG CT & \CCG \underline{0.257} & \CCG \textbf{156.7} & \CCG \underline{0.309} & \CCG \textbf{156.7} & \CCG \underline{0.378} & \CCG \textbf{157.0} & \CCG \underline{0.439} & \CCG \textbf{157.0} & \CCG \underline{0.491} & \CCG \textbf{156.6} \\
\midrule
\multirow{10}{*}{\begin{tabular}[t]{@{}l@{}}Electric\\Devices\end{tabular}} & Raw & 0.000 & 243.8 & 0.000 & 243.8 & 0.000 & 243.8 & 0.000 & 243.8 & 0.000 & 243.8 \\
\cmidrule{2-12}
 & PCA & \textbf{0.063} & \textbf{243.9} & \textbf{0.078} & \textbf{243.8} & \textbf{0.089} & \textbf{244.1} & \textbf{0.121} & \textbf{243.9} & \textbf{0.133} & \textbf{244.0} \\
 & OPP & 5.05 & 644.9 & 4.91 & 666.4 & 4.70 & 667.5 & 4.91 & 666.5 & 4.83 & 667.1 \\
 & EFO & 4.38 & 649.2 & 4.33 & 647.9 & 4.38 & 670.7 & 4.31 & 646.3 & 4.39 & 647.1 \\
 & OPF & 4.76 & 669.3 & 4.25 & 665.4 & 4.32 & 665.1 & 4.46 & 658.5 & 4.31 & 653.4 \\
 & OPR & 4.65 & 650.9 & 4.55 & 658.4 & 4.62 & 650.0 & 4.55 & 655.4 & 4.64 & 648.3 \\
 & MOPP & 4.76 & 666.3 & 4.78 & 664.9 & 5.02 & 664.7 & 4.93 & 664.9 & 5.02 & 666.9 \\
 & OPST & 45.22 & 676.7 & 45.28 & 672.3 & 45.58 & 678.8 & 45.78 & 677.6 & 45.36 & 677.7 \\
 & SOPP & 3.75 & \underline{642.8} & 3.69 & \underline{639.9} & 3.67 & \underline{642.7} & 3.71 & \underline{644.0} & 3.78 & \underline{643.7} \\
 & \CCG CT & \CCG \underline{2.47} & \CCG 789.1 & \CCG \underline{2.58} & \CCG 785.7 & \CCG \underline{2.62} & \CCG 775.7 & \CCG \underline{2.56} & \CCG 783.9 & \CCG \underline{2.43} & \CCG 781.1 \\
\midrule
\multirow{10}{*}{MoteStrain} & Raw & 0.000 & 160.8 & 0.000 & 160.8 & 0.000 & 160.8 & 0.000 & 160.8 & 0.000 & 160.8 \\
\cmidrule{2-12}
 & PCA & \textbf{0.005} & \textbf{160.8} & \textbf{0.007} & \textbf{161.8} & \textbf{0.008} & \textbf{165.6} & \textbf{0.009} & \textbf{161.7} & \textbf{0.011} & \textbf{162.3} \\
 & OPP & 4.34 & 644.3 & 4.29 & 643.3 & 4.30 & 640.3 & 4.25 & 642.7 & 4.34 & 644.8 \\
 & EFO & 3.75 & 622.5 & 3.79 & 623.3 & 3.70 & 626.6 & 3.65 & 626.5 & 3.70 & 625.3 \\
 & OPF & 3.62 & 642.4 & 3.54 & 635.2 & 3.54 & 635.1 & 3.56 & 639.7 & 3.57 & 637.4 \\
 & OPR & 4.11 & 649.1 & 4.04 & 624.7 & 4.05 & 650.7 & 4.13 & 650.5 & 4.18 & 650.5 \\
 & MOPP & 4.24 & 635.4 & 4.24 & 632.6 & 4.32 & 637.2 & 4.23 & 637.3 & 4.24 & 642.2 \\
 & OPST & 3.36 & 624.7 & 3.28 & 621.1 & 3.36 & 627.9 & 3.32 & 624.8 & 3.34 & 632.0 \\
 & SOPP & 3.34 & 627.6 & 3.41 & 620.3 & 3.31 & 632.6 & 3.35 & 620.5 & 3.29 & 621.0 \\
 & \CCG CT & \CCG \underline{0.573} & \CCG \underline{173.2} & \CCG \underline{0.754} & \CCG \underline{227.7} & \CCG \underline{0.994} & \CCG \underline{308.1} & \CCG \underline{1.21} & \CCG \underline{353.9} & \CCG \underline{2.74} & \CCG \underline{390.8} \\
\midrule
\multirow{10}{*}{PigCVP} & Raw & 0.000 & 188.1 & 0.000 & 188.1 & 0.000 & 188.1 & 0.000 & 188.1 & 0.000 & 188.1 \\
\cmidrule{2-12}
 & PCA & \textbf{0.023} & \textbf{188.5} & \textbf{0.028} & \textbf{188.2} & \textbf{0.032} & \textbf{188.4} & \textbf{0.029} & \textbf{188.1} & \textbf{0.032} & \textbf{187.8} \\
 & OPP & 42.84 & 658.8 & 42.67 & 663.3 & 43.19 & 664.5 & 42.85 & 659.6 & 42.61 & 673.3 \\
 & EFO & 33.40 & 652.7 & 33.55 & 648.6 & 33.12 & 649.0 & 33.49 & 648.8 & 33.08 & 649.6 \\
 & OPF & 17.89 & 654.8 & 17.82 & 652.5 & 17.78 & 649.9 & 18.01 & 655.0 & 17.72 & 651.4 \\
 & OPR & 34.24 & 653.7 & 34.05 & 650.7 & 34.41 & 662.9 & 34.00 & 656.1 & 34.24 & 655.5 \\
 & MOPP & 37.92 & 654.2 & 37.46 & 666.1 & 37.31 & 662.1 & 36.87 & 673.3 & 36.97 & 665.2 \\
 & OPST & 19.07 & 637.7 & 18.82 & 639.6 & 18.65 & \underline{645.4} & 18.69 & \underline{643.4} & 18.89 & \underline{641.6} \\
 & SOPP & 18.69 & 660.9 & 18.40 & 657.2 & 18.35 & 660.5 & 18.04 & 657.3 & 18.48 & 659.4 \\
 & \CCG CT & \CCG \underline{1.77} & \CCG \underline{557.0} & \CCG \underline{1.58} & \CCG \underline{570.0} & \CCG \underline{1.94} & \CCG 661.6 & \CCG \underline{2.59} & \CCG 673.8 & \CCG \underline{3.09} & \CCG 683.6 \\
\bottomrule[1.2pt]
\end{tabular}
\end{table}
\FloatBarrier

\subsection{Full Results with Increasing Input Length}

We vary the retained input prefix from $20\%$ to $100\%$
while fixing the feature budget at $K=30$.
Tables~\ref{CTPM:app:tab:prefix_quality} and~\ref{CTPM:app:tab:prefix_resources}
report the complete dataset-level results,
while Figure~\ref{CTPM:app:fig:prefix_resources} summarizes
the corresponding resource trends across the five datasets.
PCA with $K=30$ is infeasible at the $20\%$ prefix
for ElectricDevices and MoteStrain because the retained
sequence length is smaller than the requested dimension.

\begin{figure}[htbp]\centering
\includegraphics[width=\linewidth]{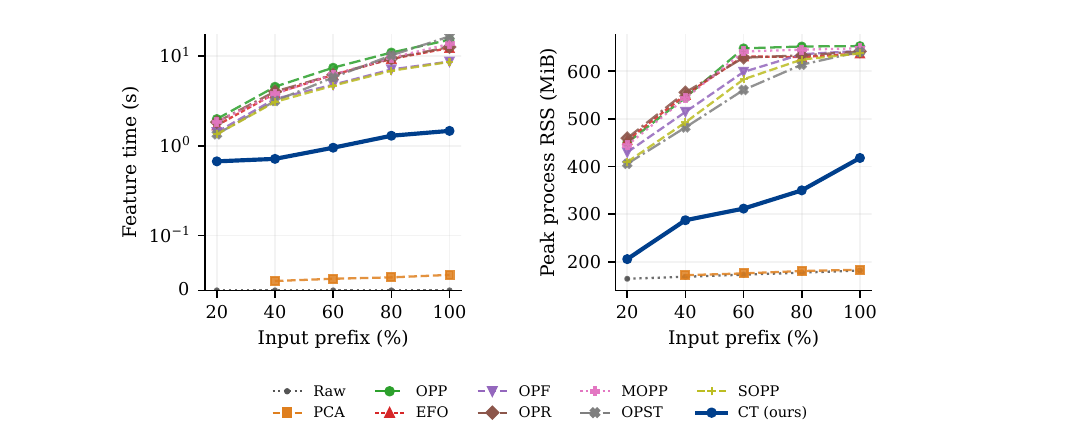}
\caption{
Mean feature-construction time and peak process RSS
as the retained input prefix increases at $K=30$.
Results are averaged over the five main datasets.
}
\label{CTPM:app:fig:prefix_resources}\end{figure}

Figure~\ref{CTPM:app:fig:prefix_resources} shows that CT maintains
a substantially lower feature-construction time than the other
pattern-mining methods as the input grows.
Its runtime increases gradually with the retained prefix,
whereas the gap to the OP-based miners generally widens.
Mean memory usage also remains lower for CT,
although its footprint increases more visibly with input length.
The full dataset-level results show that this memory advantage
is less uniform than the runtime advantage.

The quality results are strongly dataset dependent
and do not vary monotonically with input length.
PigCVP generally benefits from longer prefixes,
with CT improving steadily from the shorter inputs to the full sequence.
In contrast, Car and Beef fluctuate across prefixes,
and MoteStrain shows particularly large changes at intermediate lengths.
ElectricDevices again exhibits a distinct behavior,
with OPST achieving the strongest clustering quality over most prefixes.
These results indicate that longer inputs do not necessarily yield
better clustering representations, even when more temporal information
is available.

The dataset-level resource results reinforce the runtime trend
while revealing more variation in memory usage.
CT remains among the fastest pattern miners across essentially
all prefix settings and shows especially modest runtime growth
on longer sequences such as PigCVP.
Its memory footprint is very small on Car and Beef
and remains favorable on much of MoteStrain,
but grows substantially on ElectricDevices and PigCVP
as the retained input becomes longer.
Thus, the input-length sweep confirms a consistent runtime advantage
while also exposing the memory cost of retaining shared
Cartesian structures on more demanding inputs.

\begin{table}[th]\centering
\caption{
Clustering quality across input prefixes at $K=30$
on the five main datasets.
Values are mean NMI, homogeneity, and ARI over five clustering seeds.
Bold and underline denote the best and second-best values excluding Raw.
}
\label{CTPM:app:tab:prefix_quality}
\small
\setlength{\tabcolsep}{2pt}
\begin{tabular}{llrrrrrrrrrrrrrrr}
\toprule[1.2pt]
\multirow{2}{*}[-2pt]{Dataset} & \multirow{2}{*}[-2pt]{Method} & \multicolumn{3}{c}{20\%} & \multicolumn{3}{c}{40\%} & \multicolumn{3}{c}{60\%} & \multicolumn{3}{c}{80\%} & \multicolumn{3}{c}{100\%} \\
\cmidrule(lr){3-5}\cmidrule(lr){6-8}\cmidrule(lr){9-11}\cmidrule(lr){12-14}\cmidrule(lr){15-17}
 &  & NMI & $h$ & ARI & NMI & $h$ & ARI & NMI & $h$ & ARI & NMI & $h$ & ARI & NMI & $h$ & ARI \\
\midrule[1.2pt]
\multirow{10}{*}{Car} & Raw & 17.7 & 16.3 & 8.4 & 20.8 & 19.6 & 11.1 & 26.5 & 24.5 & 15.1 & 21.6 & 19.9 & 11.1 & 22.8 & 21.2 & 12.6 \\
\cmidrule{2-17}
 & PCA & 17.7 & 16.3 & 8.4 & \textbf{20.8} & \textbf{19.6} & \underline{11.1} & \textbf{26.2} & \underline{24.2} & 14.8 & \textbf{21.6} & \textbf{19.9} & 11.1 & \textbf{22.8} & \textbf{21.2} & 12.6 \\
 & OPP & 17.8 & 16.9 & 9.1 & 14.2 & 14.1 & \textbf{11.7} & 13.3 & 13.1 & 6.8 & 14.4 & 14.3 & 9.8 & \underline{20.8} & \underline{20.3} & \textbf{15.5} \\
 & EFO & 17.8 & 16.9 & 9.1 & 14.2 & 14.1 & \textbf{11.7} & 13.3 & 13.1 & 6.8 & 14.4 & 14.3 & 9.8 & \underline{20.8} & \underline{20.3} & \textbf{15.5} \\
 & OPF & 16.4 & 15.8 & 8.7 & 8.8 & 8.7 & 3.5 & \underline{26.0} & \textbf{25.6} & \textbf{21.5} & 11.9 & 11.8 & 7.6 & 15.0 & 14.2 & 8.3 \\
 & OPR & 17.8 & 16.9 & 9.1 & 14.2 & 14.1 & \textbf{11.7} & 13.3 & 13.1 & 6.8 & 14.4 & 14.3 & 9.8 & \underline{20.8} & \underline{20.3} & \textbf{15.5} \\
 & MOPP & \textbf{19.2} & \textbf{18.2} & \underline{9.9} & 15.3 & 14.8 & 7.4 & 25.5 & 24.1 & \underline{17.3} & \underline{17.9} & \underline{16.9} & \textbf{12.5} & 19.8 & 19.4 & \underline{13.9} \\
 & OPST & \textbf{19.2} & \textbf{18.2} & \underline{9.9} & 15.3 & 14.8 & 7.4 & 25.5 & 24.1 & \underline{17.3} & \underline{17.9} & \underline{16.9} & \textbf{12.5} & 19.8 & 19.4 & \underline{13.9} \\
 & SOPP & \underline{18.8} & \underline{18.0} & \textbf{10.7} & \underline{17.2} & \underline{16.6} & 8.6 & 14.5 & 14.3 & 12.3 & 14.2 & 14.1 & \underline{11.9} & 11.2 & 11.1 & 6.7 \\
 & \CCG CT & \CCG 17.8 & \CCG 16.9 & \CCG 9.1 & \CCG 14.2 & \CCG 14.1 & \CCG \textbf{11.7} & \CCG 13.3 & \CCG 13.1 & \CCG 6.8 & \CCG 14.4 & \CCG 14.3 & \CCG 9.8 & \CCG \underline{20.8} & \CCG \underline{20.3} & \CCG \textbf{15.5} \\
\midrule
\multirow{10}{*}{Beef} & Raw & 31.9 & 31.3 & 14.1 & 41.7 & 40.8 & 22.7 & 33.1 & 30.6 & 12.2 & 31.9 & 29.0 & 10.0 & 30.3 & 27.3 & 8.4 \\
\cmidrule{2-17}
 & PCA & \textbf{31.9} & \textbf{31.3} & \textbf{14.1} & \textbf{41.7} & \textbf{40.8} & \textbf{22.7} & \textbf{33.1} & \textbf{30.6} & \textbf{12.2} & \textbf{31.9} & \textbf{29.0} & \textbf{10.0} & \underline{30.3} & 27.3 & 8.4 \\
 & OPP & 17.4 & 17.3 & 4.4 & \underline{24.7} & \underline{24.0} & \underline{9.4} & 20.3 & 18.8 & 6.2 & 16.4 & 15.9 & 2.4 & 28.8 & \underline{27.8} & \underline{13.7} \\
 & EFO & 17.4 & 17.3 & 4.4 & \underline{24.7} & \underline{24.0} & \underline{9.4} & 20.3 & 18.8 & 6.2 & 16.4 & 15.9 & 2.4 & 28.8 & \underline{27.8} & \underline{13.7} \\
 & OPF & 18.1 & 17.8 & 4.6 & 23.9 & 23.3 & 8.8 & \underline{20.8} & 19.5 & \underline{7.7} & \underline{24.2} & \underline{23.6} & \underline{8.4} & \textbf{33.0} & \textbf{31.8} & \textbf{16.2} \\
 & OPR & 17.4 & 17.3 & 4.4 & \underline{24.7} & \underline{24.0} & \underline{9.4} & 20.3 & 18.8 & 6.2 & 16.4 & 15.9 & 2.4 & 28.8 & \underline{27.8} & \underline{13.7} \\
 & MOPP & \underline{18.4} & \underline{18.0} & \underline{6.5} & 18.9 & 18.3 & 5.6 & \underline{20.8} & \underline{20.1} & 7.3 & 14.5 & 14.1 & 3.4 & 21.9 & 20.5 & 7.9 \\
 & OPST & \underline{18.4} & \underline{18.0} & \underline{6.5} & 18.9 & 18.3 & 5.6 & \underline{20.8} & \underline{20.1} & 7.3 & 14.5 & 14.1 & 3.4 & 21.9 & 20.5 & 7.9 \\
 & SOPP & 12.6 & 12.2 & 2.5 & 10.8 & 10.3 & -0.4 & 12.0 & 11.8 & 1.4 & 13.0 & 12.9 & 1.8 & 14.0 & 13.9 & 1.7 \\
 & \CCG CT & \CCG 17.4 & \CCG 17.3 & \CCG 4.4 & \CCG \underline{24.7} & \CCG \underline{24.0} & \CCG \underline{9.4} & \CCG 20.3 & \CCG 18.8 & \CCG 6.2 & \CCG 16.4 & \CCG 15.9 & \CCG 2.4 & \CCG 28.8 & \CCG \underline{27.8} & \CCG \underline{13.7} \\
\midrule
\multirow{10}{*}{\begin{tabular}[t]{@{}l@{}}Electric\\Devices\end{tabular}} & Raw & 18.0 & 15.0 & 0.6 & 15.8 & 13.8 & 4.7 & 15.6 & 14.7 & 10.4 & 12.8 & 11.9 & 9.6 & 19.9 & 19.3 & 16.8 \\
\cmidrule{2-17}
 & PCA & -- & -- & -- & 16.1 & 14.1 & 5.0 & 19.2 & 18.3 & \underline{19.0} & 12.9 & 12.2 & \underline{9.2} & 20.1 & 19.4 & \underline{17.9} \\
 & OPP & 28.1 & 23.6 & 9.3 & 26.5 & 21.8 & 8.5 & 24.5 & 20.4 & 7.6 & 24.2 & 20.6 & 7.1 & 26.0 & 22.2 & 7.2 \\
 & EFO & 28.1 & 23.6 & 9.3 & 26.5 & 21.8 & 8.5 & 24.5 & 20.4 & 7.6 & 24.2 & 20.6 & 7.1 & 26.0 & 22.2 & 7.2 \\
 & OPF & 28.5 & 23.8 & 9.6 & 24.3 & 20.4 & 8.6 & 22.8 & 19.1 & 7.6 & 23.0 & 19.6 & 6.9 & 25.2 & 21.5 & 6.7 \\
 & OPR & 28.1 & 23.6 & 9.3 & 26.5 & 21.8 & 8.5 & 24.5 & 20.4 & 7.6 & 24.2 & 20.6 & 7.1 & 26.0 & 22.2 & 7.2 \\
 & MOPP & 28.1 & 23.6 & 9.3 & 26.5 & 21.8 & 8.5 & 24.5 & 20.4 & 7.6 & 24.2 & 20.6 & 7.1 & 26.0 & 22.2 & 7.2 \\
 & OPST & \textbf{33.8} & \textbf{31.7} & \textbf{10.8} & \textbf{29.9} & \textbf{29.0} & \textbf{20.6} & \textbf{30.0} & \textbf{30.0} & \textbf{22.2} & \textbf{29.3} & \textbf{29.5} & \textbf{21.0} & \textbf{28.6} & \textbf{28.8} & \textbf{21.0} \\
 & SOPP & 28.5 & 23.9 & 9.6 & \underline{27.5} & 23.1 & 9.0 & 24.5 & 20.4 & 7.6 & 24.2 & 20.6 & 7.1 & 26.0 & 22.2 & 7.2 \\
 & \CCG CT & \CCG \underline{28.9} & \CCG \underline{24.4} & \CCG \underline{9.7} & \CCG \underline{27.5} & \CCG \underline{23.2} & \CCG \underline{9.4} & \CCG \underline{25.3} & \CCG \underline{21.3} & \CCG 7.9 & \CCG \underline{25.1} & \CCG \underline{21.3} & \CCG 7.5 & \CCG \underline{26.6} & \CCG \underline{22.6} & \CCG 7.4 \\
\midrule
\multirow{10}{*}{MoteStrain} & Raw & 30.5 & 29.8 & 37.5 & 29.1 & 29.0 & 37.7 & 43.4 & 43.5 & 52.8 & 43.7 & 43.8 & 53.7 & 40.9 & 40.9 & 51.0 \\
\cmidrule{2-17}
 & PCA & -- & -- & -- & \textbf{29.1} & \textbf{29.0} & \textbf{37.7} & \textbf{43.5} & \textbf{43.6} & \textbf{52.8} & \textbf{43.9} & \textbf{44.0} & \textbf{53.9} & 40.7 & 40.8 & \textbf{50.9} \\
 & OPP & 34.4 & 33.0 & 29.8 & 0.0 & 0.0 & 0.2 & \underline{33.3} & \underline{32.0} & \underline{29.8} & \underline{42.9} & \underline{42.4} & \underline{45.4} & \underline{47.1} & \underline{46.5} & 48.0 \\
 & EFO & 34.4 & 33.0 & 29.8 & 0.0 & 0.0 & 0.2 & \underline{33.3} & \underline{32.0} & \underline{29.8} & \underline{42.9} & \underline{42.4} & \underline{45.4} & \underline{47.1} & \underline{46.5} & 48.0 \\
 & OPF & 30.5 & 28.3 & 21.7 & 0.0 & 0.0 & -0.0 & 25.3 & 22.8 & 15.5 & 38.6 & 37.9 & 39.1 & 42.3 & 41.8 & 44.0 \\
 & OPR & 34.4 & 33.0 & 29.8 & 0.0 & 0.0 & 0.2 & \underline{33.3} & \underline{32.0} & \underline{29.8} & \underline{42.9} & \underline{42.4} & \underline{45.4} & \underline{47.1} & \underline{46.5} & 48.0 \\
 & MOPP & 34.4 & 33.0 & 29.8 & 0.0 & 0.0 & 0.2 & \underline{33.3} & \underline{32.0} & \underline{29.8} & \underline{42.9} & \underline{42.4} & \underline{45.4} & \underline{47.1} & \underline{46.5} & 48.0 \\
 & OPST & 34.4 & 33.0 & 29.8 & 0.0 & 0.0 & -0.0 & 29.5 & 27.3 & 20.8 & 42.1 & 41.3 & 41.5 & 45.5 & 44.8 & 45.7 \\
 & SOPP & \underline{35.9} & \underline{34.3} & \underline{30.2} & \underline{0.2} & \underline{0.2} & \underline{0.5} & 0.0 & 0.0 & 0.1 & 39.9 & 39.5 & 43.6 & 35.5 & 33.9 & 29.7 \\
 & \CCG CT & \CCG \textbf{37.2} & \CCG \textbf{37.0} & \CCG \textbf{41.5} & \CCG 0.0 & \CCG 0.0 & \CCG -0.1 & \CCG 33.1 & \CCG 31.6 & \CCG 28.0 & \CCG 42.5 & \CCG 42.0 & \CCG 44.8 & \CCG \textbf{47.9} & \CCG \textbf{47.3} & \CCG \underline{49.1} \\
\midrule
\multirow{10}{*}{PigCVP} & Raw & 55.9 & 52.8 & 5.0 & 56.0 & 52.8 & 5.2 & 56.5 & 53.7 & 5.3 & 56.3 & 53.2 & 5.6 & 55.9 & 52.6 & 5.6 \\
\cmidrule{2-17}
 & PCA & \textbf{56.6} & 53.8 & \textbf{5.4} & 56.2 & 53.2 & \underline{5.2} & 57.5 & 54.7 & 6.1 & 57.0 & 54.1 & 5.7 & 57.6 & 55.0 & 6.0 \\
 & OPP & 55.3 & 53.9 & 2.2 & 56.8 & 55.4 & 4.0 & 60.6 & 58.9 & 8.7 & \underline{62.7} & \underline{61.1} & 11.3 & \underline{63.8} & \underline{62.1} & 12.9 \\
 & EFO & 55.3 & 53.9 & 2.2 & 56.8 & 55.4 & 4.0 & 60.6 & 58.9 & 8.7 & \underline{62.7} & \underline{61.1} & 11.3 & \underline{63.8} & \underline{62.1} & 12.9 \\
 & OPF & \underline{55.7} & \underline{54.2} & 2.7 & 57.8 & 56.4 & 4.8 & 60.7 & 58.9 & 8.7 & \underline{62.7} & \underline{61.1} & \underline{11.6} & 63.6 & 62.0 & 12.4 \\
 & OPR & 55.3 & 53.9 & 2.2 & 56.8 & 55.4 & 4.0 & 60.6 & 58.9 & 8.7 & \underline{62.7} & \underline{61.1} & 11.3 & \underline{63.8} & \underline{62.1} & 12.9 \\
 & MOPP & \textbf{56.6} & \textbf{55.3} & \underline{3.7} & \textbf{58.6} & \underline{56.9} & \textbf{6.0} & \underline{61.0} & \underline{59.1} & \underline{9.2} & 62.1 & 59.9 & 11.0 & \underline{63.8} & 61.8 & \underline{13.4} \\
 & OPST & \textbf{56.6} & \textbf{55.3} & \underline{3.7} & \textbf{58.6} & \underline{56.9} & \textbf{6.0} & \underline{61.0} & \underline{59.1} & \underline{9.2} & 62.1 & 59.9 & 11.0 & \underline{63.8} & 61.8 & \underline{13.4} \\
 & SOPP & 55.0 & 53.6 & 2.2 & 56.5 & 54.8 & 4.0 & 59.4 & 57.8 & 7.0 & 59.0 & 57.4 & 6.6 & 58.3 & 56.8 & 5.3 \\
 & \CCG CT & \CCG 55.4 & \CCG 54.1 & \CCG 2.0 & \CCG \underline{58.5} & \CCG \textbf{57.2} & \CCG \textbf{6.0} & \CCG \textbf{61.8} & \CCG \textbf{60.0} & \CCG \textbf{10.5} & \CCG \textbf{64.2} & \CCG \textbf{62.6} & \CCG \textbf{13.4} & \CCG \textbf{64.7} & \CCG \textbf{63.1} & \CCG \textbf{14.1} \\
\bottomrule[1.2pt]
\end{tabular}
\end{table}

\begin{table}[th]\centering
\caption{
Feature-construction time and peak memory across input prefixes
at $K=30$ on the five main datasets.
Bold and underline denote the lowest and second-lowest values excluding Raw.
}
\label{CTPM:app:tab:prefix_resources}
\small
\setlength{\tabcolsep}{4.5pt}
\begin{tabular}{llrrrrrrrrrr}\toprule[1.2pt]
\multirow{2}{*}[-2pt]{Dataset} & \multirow{2}{*}[-2pt]{Method} & \multicolumn{2}{c}{20\%} & \multicolumn{2}{c}{40\%} & \multicolumn{2}{c}{60\%} & \multicolumn{2}{c}{80\%} & \multicolumn{2}{c}{100\%} \\
\cmidrule(lr){3-4}\cmidrule(lr){5-6}\cmidrule(lr){7-8}\cmidrule(lr){9-10}\cmidrule(lr){11-12}
 &  & Time. & Mem. & Time. & Mem. & Time. & Mem. & Time. & Mem. & Time. & Mem. \\
\midrule[1.2pt]
\multirow{10}{*}{Car} & Raw & 0.000 & 157.1 & 0.000 & 157.0 & 0.000 & 157.3 & 0.000 & 158.3 & 0.000 & 158.9 \\
\cmidrule{2-12}
 & PCA & \textbf{0.004} & \underline{157.9} & \textbf{0.004} & \underline{162.9} & \textbf{0.005} & \underline{163.7} & \textbf{0.005} & \textbf{164.5} & \textbf{0.006} & \underline{160.4} \\
 & OPP & 1.99 & 619.9 & 5.25 & 637.7 & 8.70 & 647.6 & 15.58 & 648.3 & 20.86 & 652.7 \\
 & EFO & 1.73 & 618.1 & 4.67 & 622.1 & 7.40 & 627.6 & 13.60 & 626.9 & 17.36 & 632.7 \\
 & OPF & 1.63 & 620.6 & 4.31 & 621.2 & 7.12 & 630.5 & 12.53 & 640.2 & 15.71 & 634.8 \\
 & OPR & 1.87 & 618.2 & 4.75 & 622.0 & 7.60 & 630.6 & 13.79 & 627.6 & 17.90 & 640.9 \\
 & MOPP & 1.81 & 618.1 & 3.90 & 636.7 & 6.87 & 642.4 & 12.29 & 632.8 & 18.64 & 640.3 \\
 & OPST & 1.07 & 446.2 & 2.87 & 619.5 & 5.12 & 630.5 & 9.43 & 627.4 & 14.05 & 631.1 \\
 & SOPP & 1.40 & 581.8 & 4.34 & 620.3 & 7.15 & 629.7 & 12.91 & 627.1 & 15.88 & 633.1 \\
 & \CCG CT & \CCG \underline{0.372} & \CCG \textbf{156.5} & \CCG \underline{0.399} & \CCG \textbf{156.8} & \CCG \underline{0.473} & \CCG \textbf{158.0} & \CCG \underline{1.50} & \CCG \underline{165.9} & \CCG \underline{1.23} & \CCG \textbf{159.1} \\
\midrule
\multirow{10}{*}{Beef} & Raw & 0.000 & 156.2 & 0.000 & 157.0 & 0.000 & 156.3 & 0.000 & 156.8 & 0.000 & 157.0 \\
\cmidrule{2-12}
 & PCA & \textbf{0.003} & 161.5 & \textbf{0.003} & \underline{161.6} & \textbf{0.004} & \underline{157.1} & \textbf{0.004} & \underline{162.3} & \textbf{0.004} & \underline{158.1} \\
 & OPP & 0.680 & 206.1 & 1.27 & 421.9 & 1.95 & 644.7 & 2.19 & 647.6 & 3.05 & 642.4 \\
 & EFO & 0.689 & 219.7 & 1.15 & 456.3 & 1.57 & 617.8 & 1.95 & 617.4 & 2.58 & 620.5 \\
 & OPF & 0.618 & 194.0 & 0.922 & 317.4 & 1.24 & 446.9 & 1.49 & 586.4 & 2.09 & 624.5 \\
 & OPR & 0.730 & 231.1 & 1.21 & 463.3 & 1.68 & 619.4 & 2.04 & 623.6 & 2.73 & 623.5 \\
 & MOPP & 0.689 & 192.6 & 1.23 & 394.1 & 1.80 & 617.4 & 2.14 & 632.5 & 2.80 & 641.2 \\
 & OPST & \underline{0.448} & \textbf{156.1} & 0.754 & 260.5 & 0.984 & 364.9 & 1.27 & 506.0 & 1.65 & 617.9 \\
 & SOPP & 0.580 & 177.2 & 0.923 & 308.7 & 1.26 & 454.7 & 1.60 & 577.8 & 2.05 & 620.8 \\
 & \CCG CT & \CCG 1.07 & \CCG \underline{156.4} & \CCG \underline{0.290} & \CCG \textbf{156.2} & \CCG \underline{0.294} & \CCG \textbf{156.2} & \CCG \underline{0.356} & \CCG \textbf{156.4} & \CCG \underline{0.397} & \CCG \textbf{156.4} \\
\midrule
\multirow{10}{*}{\begin{tabular}[t]{@{}l@{}}Electric\\Devices\end{tabular}} & Raw & 0.000 & 184.9 & 0.000 & 199.5 & 0.000 & 215.1 & 0.000 & 227.7 & 0.000 & 243.8 \\
\cmidrule{2-12}
 & PCA & -- & -- & \textbf{0.058} & \textbf{199.5} & \textbf{0.071} & \textbf{215.2} & \textbf{0.077} & \textbf{227.6} & \textbf{0.092} & \textbf{244.0} \\
 & OPP & 1.89 & 621.2 & 2.32 & 631.2 & 3.11 & 657.5 & 3.75 & 662.4 & 5.03 & 664.8 \\
 & EFO & 1.50 & 636.6 & 2.03 & \underline{629.3} & 2.69 & 636.3 & 3.62 & 647.0 & 4.37 & \underline{643.5} \\
 & OPF & \underline{1.35} & 545.0 & 1.99 & 654.3 & 2.80 & 654.9 & 3.70 & 661.7 & 4.35 & 665.8 \\
 & OPR & 1.66 & 626.3 & 2.14 & 630.3 & 2.73 & \underline{633.0} & 3.60 & 646.0 & 4.51 & 656.7 \\
 & MOPP & 1.71 & 613.5 & 2.38 & 649.7 & 3.11 & 659.0 & 4.00 & 665.7 & 5.29 & 664.6 \\
 & OPST & 1.92 & 634.2 & 4.45 & 640.3 & 10.80 & 655.0 & 23.02 & 670.8 & 45.17 & 677.3 \\
 & SOPP & 1.36 & \underline{503.8} & 1.63 & 630.8 & 2.24 & 641.2 & 3.02 & \underline{640.2} & 3.72 & 648.0 \\
 & \CCG CT & \CCG \textbf{1.00} & \CCG \textbf{363.7} & \CCG \underline{1.43} & \CCG 652.1 & \CCG \underline{1.79} & \CCG 671.3 & \CCG \underline{2.09} & \CCG 680.6 & \CCG \underline{2.72} & \CCG 795.3 \\
\midrule
\multirow{10}{*}{MoteStrain} & Raw & 0.000 & 157.6 & 0.000 & 158.1 & 0.000 & 159.5 & 0.000 & 160.3 & 0.000 & 160.7 \\
\cmidrule{2-12}
 & PCA & -- & -- & \textbf{0.005} & \underline{163.6} & \textbf{0.007} & \underline{164.8} & \textbf{0.007} & \textbf{165.4} & \textbf{0.008} & \textbf{165.7} \\
 & OPP & 0.620 & \underline{157.9} & 1.40 & 385.3 & 2.07 & 636.1 & 2.81 & 637.6 & 4.36 & 643.3 \\
 & EFO & 0.581 & 178.0 & 0.979 & 411.7 & 1.60 & 633.2 & 2.40 & 617.5 & 3.74 & 639.7 \\
 & OPF & 0.540 & 159.0 & 1.00 & 338.8 & 1.45 & 617.2 & 2.43 & 639.2 & 3.51 & 639.2 \\
 & OPR & 0.641 & 183.5 & 1.10 & 424.2 & 1.75 & 616.9 & 2.61 & 619.3 & 4.15 & 636.9 \\
 & MOPP & 0.622 & 159.6 & 1.32 & 392.0 & 2.00 & 641.9 & 2.90 & 641.9 & 4.15 & 638.2 \\
 & OPST & \underline{0.411} & 158.5 & 0.770 & 253.5 & 1.36 & 515.8 & 2.24 & 625.9 & 3.41 & 629.8 \\
 & SOPP & 0.480 & 158.8 & 0.768 & 275.0 & 1.36 & 554.4 & 2.09 & 625.6 & 3.35 & 622.6 \\
 & \CCG CT & \CCG \textbf{0.281} & \CCG \textbf{157.7} & \CCG \underline{0.478} & \CCG \textbf{158.8} & \CCG \underline{0.577} & \CCG \textbf{159.2} & \CCG \underline{0.875} & \CCG \underline{230.0} & \CCG \underline{1.06} & \CCG \underline{299.9} \\
\midrule
\multirow{10}{*}{PigCVP} & Raw & 0.000 & 165.8 & 0.000 & 171.5 & 0.000 & 176.7 & 0.000 & 183.1 & 0.000 & 188.1 \\
\cmidrule{2-12}
 & PCA & \textbf{0.010} & \textbf{166.6} & \textbf{0.015} & \textbf{171.5} & \textbf{0.021} & \textbf{177.3} & \textbf{0.026} & \textbf{183.2} & \textbf{0.033} & \textbf{188.6} \\
 & OPP & 4.77 & 644.6 & 12.55 & 648.6 & 21.32 & 654.5 & 30.45 & 664.5 & 42.67 & 661.2 \\
 & EFO & 3.93 & 624.4 & 10.39 & 633.7 & 17.61 & 641.7 & 24.95 & 640.1 & 33.55 & 651.0 \\
 & OPF & 2.96 & 631.2 & 8.47 & 641.1 & 11.39 & 643.8 & 15.35 & 651.9 & 17.83 & 650.2 \\
 & OPR & 4.31 & 639.3 & 10.95 & 638.9 & 17.26 & 642.9 & 25.17 & 650.6 & 33.92 & 649.9 \\
 & MOPP & 4.38 & 638.8 & 10.18 & 644.7 & 17.24 & 647.9 & 26.58 & 653.9 & 37.60 & 656.8 \\
 & OPST & 2.88 & 633.7 & 6.96 & 637.1 & 10.65 & 637.9 & 15.03 & 639.5 & 18.69 & \underline{648.6} \\
 & SOPP & 2.91 & 625.8 & 7.76 & 628.4 & 11.41 & 636.0 & 14.69 & 652.5 & 18.11 & 662.9 \\
 & \CCG CT & \CCG \underline{0.635} & \CCG \underline{193.9} & \CCG \underline{0.982} & \CCG \underline{312.5} & \CCG \underline{1.63} & \CCG \underline{413.7} & \CCG \underline{1.66} & \CCG \underline{516.9} & \CCG \underline{1.92} & \CCG 679.3 \\
\bottomrule[1.2pt]
\end{tabular}
\end{table}
\FloatBarrier

\subsection{Experimental Results on Wafer}
Wafer contains 7,164 sequences of length 152 with two recorded classes.
Its clustering behavior differs sharply from the five main datasets
and can dominate macro-averaged comparisons.
At $K=30$, including Wafer increases the mean NMI gap between CT and EFO
from $0.0045$ to $0.1701$, with Wafer accounting for approximately
$97.8\%$ of the six-dataset gap.
We therefore report Wafer separately and provide its complete
quality and resource results in
Tables~\ref{CTPM:app:tab:wafer_quality}
and~\ref{CTPM:app:tab:wafer_resources}.

As shown in Table~\ref{CTPM:app:tab:wafer_quality},
CT reaches perfect clustering at $K=30$ and retains it at
$K=40$ and $K=50$.
Several frequency-based OP methods reach the same result only at $K=50$,
while SOPP reaches it at smaller budgets but does not retain it as $K$ grows.
This unusual behavior motivates the feature-level analysis presented
in the main text.

\begin{table}[th]\centering
\caption{
Clustering quality across feature budgets on Wafer.
Values are mean NMI, homogeneity, and ARI over five clustering seeds.
Bold and underline denote the best and second-best values excluding Raw.
}
\label{CTPM:app:tab:wafer_quality}
\small
\setlength{\tabcolsep}{2.0pt}
\begin{tabular}{lrrrrrrrrrrrrrrr}
\toprule[1.2pt]
\multirow{2}{*}[-2pt]{Method} & \multicolumn{3}{c}{$K=10$} & \multicolumn{3}{c}{$K=20$} & \multicolumn{3}{c}{$K=30$} & \multicolumn{3}{c}{$K=40$} & \multicolumn{3}{c}{$K=50$} \\
\cmidrule(lr){2-4}\cmidrule(lr){5-7}\cmidrule(lr){8-10}\cmidrule(lr){11-13}\cmidrule(lr){14-16}
 & NMI & $h$ & ARI & NMI & $h$ & ARI & NMI & $h$ & ARI & NMI & $h$ & ARI & NMI & $h$ & ARI \\
\midrule[1.2pt]
Raw & 0.0 & 0.0 & 0.1 & 0.0 & 0.0 & 0.1 & 0.0 & 0.0 & 0.1 & 0.0 & 0.0 & 0.1 & 0.0 & 0.0 & 0.1 \\
\midrule
PCA & 0.0 & 0.0 & 0.1 & 0.0 & 0.0 & 0.1 & 0.0 & 0.0 & 0.1 & 0.0 & 0.0 & 0.1 & 0.0 & 0.0 & 0.1 \\
OPP & 0.6 & 0.9 & 0.8 & 0.9 & 1.3 & -2.0 & 0.2 & 0.2 & -2.4 & 0.1 & 0.1 & -1.7 & \textbf{100.0} & \textbf{100.0} & \textbf{100.0} \\
EFO & 0.6 & 0.9 & 0.8 & 0.9 & 1.3 & -2.0 & 0.2 & 0.2 & -2.4 & 0.1 & 0.1 & -1.7 & \textbf{100.0} & \textbf{100.0} & \textbf{100.0} \\
OPF & 0.9 & 1.2 & 1.8 & 0.9 & 1.2 & -2.3 & 0.2 & 0.3 & -2.7 & 0.1 & 0.1 & -1.8 & \textbf{100.0} & \textbf{100.0} & \textbf{100.0} \\
OPR & 0.6 & 0.9 & 0.8 & 0.9 & 1.3 & -2.0 & 0.2 & 0.2 & -2.4 & 0.1 & 0.1 & -1.7 & \textbf{100.0} & \textbf{100.0} & \textbf{100.0} \\
MOPP & 0.6 & 0.9 & 0.8 & 0.9 & 1.3 & -2.0 & 0.2 & 0.2 & -2.4 & 0.1 & 0.1 & -1.7 & \textbf{100.0} & \textbf{100.0} & \textbf{100.0} \\
OPST & \underline{16.3} & \underline{23.4} & \underline{4.2} & \underline{17.0} & \underline{24.4} & \underline{5.4} & \underline{17.5} & \underline{25.0} & \underline{6.3} & \underline{19.9} & \underline{28.4} & \underline{10.5} & \underline{34.8} & \underline{45.8} & \underline{35.9} \\
SOPP & \textbf{100.0} & \textbf{100.0} & \textbf{100.0} & \textbf{100.0} & \textbf{100.0} & \textbf{100.0} & 0.8 & 1.1 & -2.3 & 0.1 & 0.1 & -2.0 & 0.0 & 0.0 & -0.6 \\
\CCG \textbf{CT} & \CCG 13.3 & \CCG 18.9 & \CCG -0.8 & \CCG 0.9 & \CCG 1.3 & \CCG -2.1 & \CCG \textbf{100.0} & \CCG \textbf{100.0} & \CCG \textbf{100.0} & \CCG \textbf{100.0} & \CCG \textbf{100.0} & \CCG \textbf{100.0} & \CCG \textbf{100.0} & \CCG \textbf{100.0} & \CCG \textbf{100.0} \\
\bottomrule[1.2pt]
\end{tabular}
\end{table}

\begin{table}[th]\centering
\caption{
Feature-construction time and peak memory across feature budgets on Wafer.
Time is averaged over five runs and memory is measured as peak process RSS.
Bold and underline denote the lowest and second-lowest values excluding Raw.
}
\label{CTPM:app:tab:wafer_resources}
\small
\setlength{\tabcolsep}{4.5pt}
\begin{tabular}{lrrrrrrrrrr}
\toprule[1.2pt]
\multirow{2}{*}[-2pt]{Method} & \multicolumn{2}{c}{$K=10$} & \multicolumn{2}{c}{$K=20$} & \multicolumn{2}{c}{$K=30$} & \multicolumn{2}{c}{$K=40$} & \multicolumn{2}{c}{$K=50$} \\
\cmidrule(lr){2-3}\cmidrule(lr){4-5}\cmidrule(lr){6-7}\cmidrule(lr){8-9}\cmidrule(lr){10-11}
 & Time. & Mem. & Time. & Mem. & Time. & Mem. & Time. & Mem. & Time. & Mem. \\
\midrule[1.2pt]
Raw & 0.000 & 214.8 & 0.000 & 214.8 & 0.000 & 214.8 & 0.000 & 214.8 & 0.000 & 214.8 \\
\midrule
PCA & \textbf{0.045} & \textbf{215.3} & \textbf{0.045} & \textbf{215.2} & \textbf{0.049} & \textbf{215.6} & \textbf{0.061} & \textbf{215.5} & \textbf{0.070} & \textbf{215.7} \\
OPP & 31.06 & 669.0 & 31.12 & 661.1 & 31.33 & 672.3 & 31.15 & 661.8 & 31.09 & 678.3 \\
EFO & 29.67 & 650.4 & 29.89 & 654.9 & 29.82 & 648.8 & 30.43 & \underline{645.9} & 30.06 & \underline{645.9} \\
OPF & 28.18 & 660.5 & 28.67 & 662.1 & 28.94 & 657.2 & 29.06 & 660.4 & 28.66 & 660.8 \\
OPR & 30.31 & 662.3 & 30.34 & 657.2 & 30.01 & 659.8 & 30.66 & 660.3 & 30.22 & 651.6 \\
MOPP & 18.72 & 675.1 & 19.48 & 663.2 & 19.84 & 665.4 & 19.15 & 660.7 & 19.04 & 659.7 \\
OPST & 33.74 & 655.8 & 33.98 & 655.0 & 33.22 & 655.3 & 33.38 & 656.0 & 33.06 & 655.9 \\
SOPP & 28.45 & \underline{646.4} & 28.71 & \underline{646.5} & 29.28 & \underline{646.6} & 29.50 & 646.4 & 29.42 & 647.7 \\
\CCG \textbf{CT} & \CCG \underline{2.01} & \CCG 665.8 & \CCG \underline{2.54} & \CCG 675.5 & \CCG \underline{2.93} & \CCG 774.7 & \CCG \underline{3.26} & \CCG 769.8 & \CCG \underline{3.57} & \CCG 765.9 \\
\bottomrule[1.2pt]
\end{tabular}
\end{table}

Table~\ref{CTPM:app:tab:wafer_resources} further illustrates
the time-memory trade-off of CT mining.
Across feature budgets, CT remains substantially faster than the
other pattern-mining methods, while its memory footprint becomes
larger at higher budgets.
This complements the main results by showing that the computational
benefit of shared Cartesian structure can come with additional
storage cost in some datasets.

The input-prefix results in
Tables~\ref{CTPM:app:tab:wafer_input_length_quality}
and~\ref{CTPM:app:tab:wafer_input_length_resources}
reveal substantial sensitivity to the amount of retained input.
As shown in Table~\ref{CTPM:app:tab:wafer_input_length_quality},
CT achieves nearly perfect clustering at the $20\%$ and $40\%$
prefixes and again at the full sequence, but not at the intermediate
$60\%$ and $80\%$ prefixes.
Several OP-based methods show similarly non-monotonic behavior.
Because the feature dictionary is reconstructed independently
for each prefix, these results indicate that clustering quality
depends strongly on which patterns enter the fixed top-$K$
representation, rather than simply on the amount of observed input.
Table~\ref{CTPM:app:tab:wafer_input_length_resources} provides the
corresponding resource measurements across the same prefixes.

\begin{table}[th]\centering
\caption{
Clustering quality across input prefixes on Wafer at $K=30$.
Values are mean NMI, homogeneity, and ARI over five clustering seeds.
Bold and underline denote the best and second-best values excluding Raw.
}
\label{CTPM:app:tab:wafer_input_length_quality}
\small
\setlength{\tabcolsep}{3.5pt}
\begin{tabular}{lrrrrrrrrrrrrrrr}
\toprule[1.2pt]
\multirow{2}{*}[-2pt]{Method} & \multicolumn{3}{c}{20\%} & \multicolumn{3}{c}{40\%} & \multicolumn{3}{c}{60\%} & \multicolumn{3}{c}{80\%} & \multicolumn{3}{c}{100\%} \\
\cmidrule(lr){2-4}\cmidrule(lr){5-7}\cmidrule(lr){8-10}\cmidrule(lr){11-13}\cmidrule(lr){14-16}
 & NMI & $h$ & ARI & NMI & $h$ & ARI & NMI & $h$ & ARI & NMI & $h$ & ARI & NMI & $h$ & ARI \\
\midrule[1.2pt]
Raw & 0.6 & 0.9 & 2.5 & 2.0 & 2.8 & 4.5 & 0.0 & 0.0 & 0.1 & 0.0 & 0.1 & 0.6 & 0.0 & 0.0 & 0.1 \\
\midrule
PCA & 0.6 & 0.9 & 2.5 & 2.0 & 2.8 & \underline{4.5} & 0.0 & 0.0 & \underline{0.1} & 0.0 & 0.1 & \textbf{0.6} & 0.0 & 0.0 & 0.1 \\
OPP & \underline{97.8} & \underline{97.4} & \underline{99.3} & 3.8 & 5.2 & -4.8 & 1.8 & 1.8 & -7.0 & 4.5 & 6.0 & -6.0 & 0.2 & 0.2 & -2.4 \\
EFO & \underline{97.8} & \underline{97.4} & \underline{99.3} & 3.8 & 5.2 & -4.8 & 1.8 & 1.8 & -7.0 & 4.5 & 6.0 & -6.0 & 0.2 & 0.2 & -2.4 \\
OPF & 94.1 & 93.0 & 97.6 & 5.1 & 7.0 & -4.7 & 1.9 & 1.9 & -7.1 & 2.8 & 3.8 & \underline{-5.2} & 0.2 & 0.3 & -2.7 \\
OPR & \underline{97.8} & \underline{97.4} & \underline{99.3} & 3.8 & 5.2 & -4.8 & 1.8 & 1.8 & -7.0 & 4.5 & 6.0 & -6.0 & 0.2 & 0.2 & -2.4 \\
MOPP & \underline{97.8} & \underline{97.4} & \underline{99.3} & 3.8 & 5.2 & -4.8 & 1.8 & 1.8 & -7.0 & 4.5 & 6.0 & -6.0 & 0.2 & 0.2 & -2.4 \\
OPST & 8.8 & 11.7 & -7.3 & \underline{10.1} & \underline{13.9} & -5.6 & \underline{10.1} & \underline{13.9} & -5.6 & \textbf{14.2} & \textbf{20.2} & \textbf{0.6} & \underline{17.5} & \underline{25.0} & \underline{6.3} \\
SOPP & 89.6 & 87.5 & 95.2 & 3.5 & 4.7 & -5.1 & \textbf{21.5} & \textbf{21.6} & \textbf{14.2} & 3.7 & 4.9 & -5.7 & 0.8 & 1.1 & -2.3 \\
\CCG \textbf{CT} & \CCG \textbf{99.6} & \CCG \textbf{99.6} & \CCG \textbf{99.9} & \CCG \textbf{99.6} & \CCG \textbf{99.6} & \CCG \textbf{99.9} & \CCG 2.0 & \CCG 2.1 & \CCG -7.3 & \CCG \underline{4.7} & \CCG \underline{6.3} & \CCG -6.0 & \CCG \textbf{100.0} & \CCG \textbf{100.0} & \CCG \textbf{100.0} \\
\bottomrule[1.2pt]
\end{tabular}
\end{table}

\begin{table}[th]\centering
\caption{
Feature-construction time and peak memory across input prefixes
on Wafer at $K=30$.
Time is averaged over five runs and memory is measured as peak process RSS.
Bold and underline denote the lowest and second-lowest values excluding Raw.
}
\label{CTPM:app:tab:wafer_input_length_resources}
\small
\setlength{\tabcolsep}{4.5pt}
\begin{tabular}{lrrrrrrrrrr}
\toprule[1.2pt]
\multirow{2}{*}[-2pt]{Method} & \multicolumn{2}{c}{20\%} & \multicolumn{2}{c}{40\%} & \multicolumn{2}{c}{60\%} & \multicolumn{2}{c}{80\%} & \multicolumn{2}{c}{100\%} \\
\cmidrule(lr){2-3}\cmidrule(lr){4-5}\cmidrule(lr){6-7}\cmidrule(lr){8-9}\cmidrule(lr){10-11}
 & Time. & Mem. & Time. & Mem. & Time. & Mem. & Time. & Mem. & Time. & Mem. \\
\midrule[1.2pt]
Raw & 0.000 & 174.7 & 0.000 & 185.7 & 0.000 & 194.3 & 0.000 & 204.0 & 0.000 & 215.4 \\
\midrule
PCA & \textbf{0.018} & \textbf{174.0} & \textbf{0.029} & \textbf{185.6} & \textbf{0.037} & \textbf{194.2} & \textbf{0.044} & \textbf{204.7} & \textbf{0.053} & \textbf{215.0} \\
OPP & 1.65 & 627.7 & 3.68 & 661.5 & 8.27 & 650.8 & 10.64 & 651.3 & 30.74 & 671.8 \\
EFO & 1.40 & 631.6 & 2.94 & 632.8 & 7.26 & 635.9 & 10.45 & 642.9 & 30.14 & 651.7 \\
OPF & 1.49 & 559.7 & 2.86 & 648.6 & 6.86 & 645.7 & 9.58 & 650.4 & 28.41 & 658.5 \\
OPR & 1.52 & 647.3 & 3.19 & 646.6 & 7.50 & 651.6 & 10.85 & 655.0 & 30.50 & 659.1 \\
MOPP & 1.83 & 629.9 & 3.50 & 651.7 & 8.45 & 654.1 & 10.75 & 652.9 & 19.39 & 660.7 \\
OPST & 1.81 & 626.5 & 3.77 & 637.6 & 11.50 & 635.9 & 19.11 & 641.8 & 33.22 & 655.5 \\
SOPP & 1.31 & 475.4 & 2.37 & 630.6 & 6.60 & \underline{631.3} & 9.72 & \underline{637.9} & 28.90 & \underline{646.7} \\
\CCG \textbf{CT} & \CCG \underline{0.869} & \CCG \underline{293.5} & \CCG \underline{1.48} & \CCG \underline{546.6} & \CCG \underline{2.35} & \CCG 670.6 & \CCG \underline{2.62} & \CCG 677.2 & \CCG \underline{2.80} & \CCG 768.0 \\
\bottomrule[1.2pt]
\end{tabular}
\end{table}

\paragraph{Detailed feature statistics.}
Table~\ref{CTPM:app:tab:wafer_features} provides the numerical support for the
feature-level examples discussed in the main text.
The class-wise count ranges are disjoint for all four representative
CT features.
These statistics use unstandardized occurrence counts and are computed
only after evaluation, without affecting feature selection or clustering.

\begin{table}[th]
    \centering
    \small
    \caption{
    Detailed statistics for representative Wafer CT features from the
    saved $K=30$ dictionary.
    Class means and ranges use unstandardized occurrence counts.
    }
    \label{CTPM:app:tab:wafer_features}
    \begin{tabular}{lrrrrrr}
    \toprule[1.2pt]
    Rank & CT code & Total support & Mean $-1$ & Range $-1$ & Mean $+1$ & Range $+1$ \\
    \midrule[1.2pt]
    8 & $(0,0,1)$ & 57,857 & 50.71 & [28, 64] & 3.00 & [0, 17] \\
    25 & $(0,1,0,1)$ & 16,507 & 20.00 & [7, 32] & 0.20 & [0, 4] \\
    29 & $(0,0,1,0)$ & 15,482 & 18.05 & [8, 30] & 0.27 & [0, 6] \\
    30 & $(0,0,1,2)$ & 15,227 & 18.12 & [7, 29] & 0.22 & [0, 6] \\
    \bottomrule[1.2pt]
    \end{tabular}
\end{table}

\subsection{Full Results on Comparison with Naive Implementations}

Let $n$ denote the length of a sequence, $\ell$ an individual
pattern length, and $L$ the maximum evaluated length.
A direct pairwise implementation compares all pairs among the
$n-\ell+1$ windows of length $\ell$, with each exact CT comparison
requiring $O(\ell)$ time.
Its total worst-case cost is therefore
\[
 T_{\mathrm{pair}}(n,L)
 =O\!\left(\sum_{\ell=2}^{L}(n-\ell+1)^2\ell\right)
 =O(n^2L^2),
\]
which becomes $O(n^4)$ when $L=\Theta(n)$.
Because this direct implementation is prohibitively expensive,
we use CT-Hash as the practical naive baseline in our experiments.

CT-Hash independently constructs a Cartesian-tree encoding
for every window using a monotone stack and groups identical
encodings with a hash table.
Constructing and hashing a length-$\ell$ encoding requires
$O(\ell)$ time, giving the expected cumulative cost
\[
 \mathbb{E}[T_{\mathrm{hash}}(n,L)]
 =O\!\left(\sum_{\ell=2}^{L}(n-\ell+1)\ell\right)
 =O(nL^2).
\]
Thus, CT-Hash is linear in $n$ for fixed $L$ and cubic when
$L=\Theta(n)$, while our CT implementation reuses shared
Cartesian structure across pattern lengths.

We evaluate $L\in\{8,16,32,64,128,256\}$ whenever $L\leq n$.
Both methods process every length from $2$ to $L$ with early stopping
disabled and use identical pattern ranking with $K=30$.
We verify that their selected patterns, support values, and feature
matrices match in every paired condition.
Reported time covers the complete feature-construction pipeline
and excludes clustering.

The full results show that the runtime advantage is not uniform
across all sequence lengths and datasets.
CT is faster than CT-Hash at every feasible $L$ on MoteStrain,
while ElectricDevices consistently favors CT-Hash.
For the longer Car, Beef, and PigCVP sequences, however,
the runtime gap generally widens as $L$ increases.
The memory results also confirm the time-space trade-off discussed
in the main text, with the most pronounced growth appearing on PigCVP.

\begin{table}[th]
\centering
\caption{
Comparison with the naive CT-Hash implementation as the maximum
pattern length $L$ increases.
We report five-run mean feature-construction time and
single-run peak process RSS.
}
\label{CTPM:app:tab:comparison_naive}
\small
\setlength{\tabcolsep}{4.5pt}
\begin{tabular}{lllrrrrrr}
\toprule[1.2pt]
Dataset & Algorithm & Measure & $L=8$ & $L=16$ & $L=32$ & $L=64$ & $L=128$ & $L=256$ \\
\midrule[1.2pt]
\multirow{4}{*}[-3pt]{\shortstack[c]{Car\\($n=577$)}} & \multirow{2}{*}{CT} & Time. (s) & 0.357 & 0.508 & 0.818 & 1.370 & 2.333 & 3.656 \\
 &  & Mem. (MiB) & 127.2 & 152.0 & 251.4 & 490.3 & 692.5 & 901.4 \\
\cmidrule(lr){2-9}
 & \multirow{2}{*}{CT-Hash} & Time. (s) & 0.467 & 0.830 & 1.700 & 3.929 & 10.162 & 25.691 \\
 &  & Mem. (MiB) & 225.0 & 651.3 & 652.7 & 676.7 & 708.8 & 723.4 \\
\midrule
\multirow{4}{*}[-3pt]{\shortstack[c]{Beef\\($n=470$)}} & \multirow{2}{*}{CT} & Time. (s) & 0.283 & 0.353 & 0.563 & 0.974 & 1.456 & 1.819 \\
 &  & Mem. (MiB) & 127.2 & 127.2 & 166.3 & 322.4 & 578.0 & 681.7 \\
\cmidrule(lr){2-9}
 & \multirow{2}{*}{CT-Hash} & Time. (s) & 0.303 & 0.493 & 1.027 & 1.926 & 4.151 & 9.163 \\
 &  & Mem. (MiB) & 132.8 & 321.2 & 640.6 & 658.3 & 661.3 & 650.3 \\
\midrule
\multirow{4}{*}[-3pt]{\shortstack[c]{Electric\\Devices\\($n=96$)}} & \multirow{2}{*}{CT} & Time. (s) & 2.933 & 3.521 & 4.772 & 7.650 & -- & -- \\
 &  & Mem. (MiB) & 731.2 & 741.6 & 839.9 & 1,525.0 & -- & -- \\
\cmidrule(lr){2-9}
 & \multirow{2}{*}{CT-Hash} & Time. (s) & 2.412 & 3.153 & 4.127 & 5.399 & -- & -- \\
 &  & Mem. (MiB) & 675.1 & 676.4 & 691.1 & 693.0 & -- & -- \\
\midrule
\multirow{4}{*}[-3pt]{\shortstack[c]{MoteStrain\\($n=84$)}} & \multirow{2}{*}{CT} & Time. (s) & 0.546 & 0.818 & 1.337 & 1.751 & -- & -- \\
 &  & Mem. (MiB) & 161.3 & 230.9 & 417.9 & 668.2 & -- & -- \\
\cmidrule(lr){2-9}
 & \multirow{2}{*}{CT-Hash} & Time. (s) & 0.626 & 1.208 & 2.241 & 3.704 & -- & -- \\
 &  & Mem. (MiB) & 332.3 & 642.4 & 671.0 & 675.4 & -- & -- \\
\midrule
\multirow{4}{*}[-3pt]{\shortstack[c]{PigCVP\\($n=2000$)}} & \multirow{2}{*}{CT} & Time. (s) & 1.697 & 2.712 & 4.369 & 6.692 & 10.555 & 17.398 \\
 &  & Mem. (MiB) & 551.1 & 667.8 & 993.9 & 1,436.2 & 3,761.1 & 6,841.0 \\
\cmidrule(lr){2-9}
 & \multirow{2}{*}{CT-Hash} & Time. (s) & 1.669 & 4.766 & 10.206 & 18.916 & 33.675 & 62.629 \\
 &  & Mem. (MiB) & 653.1 & 730.4 & 739.7 & 755.6 & 752.3 & 753.7 \\
\bottomrule[1.2pt]
\end{tabular}
\end{table}

\subsection{Threshold-based Feature Selection}

The main experiments control representation size using a fixed feature
budget.
We additionally evaluate the native threshold-based setting, where all
patterns satisfying the mining criteria are retained.
We vary the minimum support over
$\mathrm{minsup}\in\{5,10,25\}$.
For OPR and SOPP, which introduce additional confidence and stability
criteria, we report the setting with the highest mean NMI for each
dataset and support threshold.
Table~\ref{CTPM:app:tab:threshold} summarizes the resulting feature counts,
clustering quality, and feature-construction time.
These results are intended as a sensitivity analysis rather than the
primary controlled comparison.

For a fixed pattern length and the same set of valid windows,
a CT class combines the disjoint OP classes that map to the same
Cartesian-tree structure.
Its support can therefore be written as
\[
 \mathrm{supp}_{\mathrm{CT}}(c)
 =
 \sum_{p:\,\mathrm{CT}(p)=c}
 \mathrm{supp}_{\mathrm{OP}}(p).
\]
Consequently, several OP classes that individually fall below the
support threshold may form a CT class that exceeds it.
A coarser equivalence relation therefore does not necessarily produce
fewer threshold-qualified features.

As shown in Table~\ref{CTPM:app:tab:threshold}, the number of retained
features differs substantially across equivalence relations.
CT often returns more features than ordinary frequent-OP mining,
particularly at lower support thresholds, which is consistent with
support being aggregated across multiple OP classes.
This behavior is especially visible on Car, Beef, and PigCVP.

Clustering quality also depends strongly on the support threshold.
CT is among the strongest pattern-based representations at higher
thresholds on ElectricDevices, MoteStrain, and PigCVP, while no single
threshold is uniformly favorable across all datasets.
Because the methods retain substantially different numbers of features,
the runtimes in Table~\ref{CTPM:app:tab:threshold} reflect their complete
threshold-based workloads rather than a comparison at equal
representation size.
These results complement the fixed-budget experiments and also motivate
using a common feature budget for the primary comparison.

\begin{table}[th]\centering
\caption{
Threshold-based feature selection on the five main datasets.
All qualifying features are retained at each minimum support.
Dim denotes the resulting feature count and Time denotes mean
feature-construction time.
OPR and SOPP use the best mean-NMI setting from their parameter grids.
Bold and underline denote the best and second-best quality among
pattern-mining methods.
}
\label{CTPM:app:tab:threshold}
\small
\setlength{\tabcolsep}{1.70pt}
\begin{tabular}{llrrrrrrrrrrrrrrr}
\toprule[1.2pt]
\multirow{2}{*}[-2pt]{Dataset} & \multirow{2}{*}[-2pt]{Method} & \multicolumn{5}{c}{minsup = 5} & \multicolumn{5}{c}{minsup = 10} & \multicolumn{5}{c}{minsup = 25} \\
\cmidrule(lr){3-7}\cmidrule(lr){8-12}\cmidrule(lr){13-17}
 &  & Dim & NMI & $h$ & ARI & Time. & Dim & NMI & $h$ & ARI & Time. & Dim & NMI & $h$ & ARI & Time. \\
\midrule[1.2pt]
\multirow{10}{*}{Car} & Raw & 577 & 22.8 & 21.2 & 12.6 & 0.00 & 577 & 22.8 & 21.2 & 12.6 & 0.00 & 577 & 22.8 & 21.2 & 12.6 & 0.00 \\
 & PCA & 30 & 22.8 & 21.2 & 12.6 & 0.01 & 30 & 22.8 & 21.2 & 12.6 & 0.01 & 30 & 22.8 & 21.2 & 12.6 & 0.01 \\
\cmidrule{2-17}
 & OPP & 596 & 27.1 & 25.6 & 20.1 & 15.99 & 293 & \textbf{28.4} & \textbf{26.3} & \textbf{22.7} & 15.14 & 219 & \textbf{25.5} & \underline{23.7} & \underline{19.9} & 11.59 \\
 & EFO & 596 & 27.1 & 25.6 & 20.1 & 14.78 & 293 & \textbf{28.4} & \textbf{26.3} & \textbf{22.7} & 14.03 & 219 & \textbf{25.5} & \underline{23.7} & \underline{19.9} & 10.88 \\
 & OPF & 334 & 23.5 & 21.4 & 17.5 & 14.74 & 253 & 23.9 & 21.8 & 18.2 & 13.46 & 208 & 23.0 & 21.9 & 17.6 & 9.82 \\
 & OPR & 492 & \textbf{30.2} & \textbf{28.3} & \textbf{24.2} & 15.06 & 270 & 26.4 & 24.3 & \underline{20.3} & 14.27 & 218 & \textbf{25.5} & \textbf{23.8} & \textbf{20.1} & 11.11 \\
 & MOPP & 348 & 27.6 & 26.0 & 20.3 & 13.24 & 165 & 26.5 & 24.4 & 19.3 & 12.09 & 113 & 23.5 & 22.0 & 16.9 & 9.74 \\
 & OPST & 348 & 27.6 & 26.0 & 20.3 & 12.34 & 165 & 26.5 & 24.4 & 19.3 & 11.12 & 113 & 23.5 & 22.0 & 16.9 & 8.72 \\
 & SOPP & 336 & \underline{29.8} & \underline{27.1} & \underline{22.7} & 14.73 & 241 & 25.6 & 23.6 & 19.4 & 13.91 & 163 & 24.1 & 22.4 & 18.1 & 10.81 \\
 & \CCG CT & \CCG 735 & \CCG 23.2 & \CCG 21.7 & \CCG 16.0 & \CCG 10.16 & \CCG 324 & \CCG \underline{26.9} & \CCG \underline{24.8} & \CCG 20.0 & \CCG 9.61 & \CCG 223 & \CCG \underline{24.9} & \CCG 23.2 & \CCG 19.3 & \CCG 7.66 \\
\midrule
\multirow{10}{*}{Beef} & Raw & 470 & 30.3 & 27.3 & 8.4 & 0.00 & 470 & 30.3 & 27.3 & 8.4 & 0.00 & 470 & 30.3 & 27.3 & 8.4 & 0.00 \\
 & PCA & 30 & 30.3 & 27.3 & 8.4 & 0.00 & 30 & 30.3 & 27.3 & 8.4 & 0.00 & 30 & 30.3 & 27.3 & 8.4 & 0.00 \\
\cmidrule{2-17}
 & OPP & 366 & 16.6 & 16.4 & 2.6 & 1.97 & 136 & 20.5 & 19.1 & 5.6 & 1.65 & 79 & 23.7 & 22.2 & 6.0 & 1.04 \\
 & EFO & 366 & 16.6 & 16.4 & 2.6 & 1.92 & 136 & 20.5 & 19.1 & 5.6 & 1.58 & 79 & 23.7 & 22.2 & 6.0 & 0.988 \\
 & OPF & 173 & \textbf{27.0} & \textbf{24.8} & \textbf{10.3} & 1.85 & 105 & \textbf{34.0} & \textbf{29.9} & \textbf{13.3} & 1.53 & 61 & \textbf{31.0} & \textbf{30.0} & \textbf{14.2} & 0.758 \\
 & OPR & 324 & 17.5 & 16.9 & 4.3 & 2.02 & 119 & \underline{22.9} & \underline{21.1} & 6.4 & 1.66 & 75 & 23.8 & 22.4 & 6.1 & 1.05 \\
 & MOPP & 240 & 18.3 & 17.9 & 4.7 & 1.74 & 89 & 17.2 & 16.9 & 4.1 & 1.35 & 40 & 21.3 & 19.9 & 5.4 & 0.859 \\
 & OPST & 240 & 18.3 & 17.9 & 4.7 & 1.47 & 89 & 17.2 & 16.9 & 4.1 & 1.25 & 40 & 21.3 & 19.9 & 5.4 & 0.787 \\
 & SOPP & 325 & 19.3 & 18.4 & 5.8 & 1.70 & 75 & 21.5 & 18.8 & \underline{7.1} & 1.34 & 27 & 23.7 & 21.8 & \underline{6.5} & 0.638 \\
 & \CCG CT & \CCG 520 & \CCG \underline{23.0} & \CCG \underline{22.0} & \CCG \underline{9.2} & \CCG 3.61 & \CCG 169 & \CCG 19.4 & \CCG 18.5 & \CCG 4.4 & \CCG 1.04 & \CCG 81 & \CCG \underline{24.6} & \CCG \underline{23.1} & \CCG \underline{6.5} & \CCG 0.959 \\
\midrule
\multirow{10}{*}{\begin{tabular}[t]{@{}l@{}}Electric\\Devices\end{tabular}} & Raw & 96 & 19.9 & 19.3 & 16.8 & 0.00 & 96 & 19.9 & 19.3 & 16.8 & 0.00 & 96 & 19.9 & 19.3 & 16.8 & 0.00 \\
 & PCA & 30 & 20.1 & 19.4 & 17.9 & 0.09 & 30 & 20.1 & 19.4 & 17.9 & 0.09 & 30 & 20.1 & 19.4 & 17.9 & 0.09 \\
\cmidrule{2-17}
 & OPP & 250 & 11.7 & 5.7 & -0.4 & 3.75 & 72 & 18.7 & 12.7 & 1.6 & 3.30 & 8 & \underline{33.6} & \underline{31.4} & 12.1 & 1.74 \\
 & EFO & 250 & 11.7 & 5.7 & -0.4 & 3.51 & 72 & 18.7 & 12.7 & 1.6 & 2.92 & 8 & \underline{33.6} & \underline{31.4} & 12.1 & 1.60 \\
 & OPF & 101 & 17.8 & \underline{11.6} & 1.3 & 3.52 & 26 & \textbf{26.4} & \textbf{22.1} & \underline{7.9} & 2.72 & 2 & 24.9 & 25.2 & 11.3 & 1.57 \\
 & OPR & 249 & \underline{18.0} & 11.5 & \underline{3.6} & 3.79 & 65 & 20.7 & 14.4 & 2.8 & 3.02 & 7 & \textbf{34.0} & \textbf{31.7} & \underline{12.8} & 1.60 \\
 & MOPP & 220 & 15.6 & 9.2 & 0.9 & 3.77 & 71 & 18.7 & 12.6 & 1.6 & 3.17 & 8 & \underline{33.6} & \underline{31.4} & 12.1 & 1.93 \\
 & OPST & 2,260 & \textbf{30.3} & \textbf{24.6} & \textbf{9.1} & 45.58 & 759 & \textbf{26.4} & \underline{21.9} & \textbf{12.7} & 43.92 & 175 & 31.0 & 28.5 & \textbf{26.5} & 36.95 \\
 & SOPP & 249 & 12.3 & 6.4 & 0.5 & 2.98 & 42 & \underline{25.0} & 20.7 & 6.9 & 1.86 & 8 & \underline{33.6} & \underline{31.4} & 12.1 & 1.49 \\
 & \CCG CT & \CCG 218 & \CCG 11.7 & \CCG 6.8 & \CCG 0.1 & \CCG 3.92 & \CCG 69 & \CCG 20.8 & \CCG 14.3 & \CCG 2.7 & \CCG 3.31 & \CCG 7 & \CCG \textbf{34.0} & \CCG \textbf{31.7} & \CCG \underline{12.8} & \CCG 2.42 \\
\midrule
\multirow{10}{*}{MoteStrain} & Raw & 84 & 40.9 & 40.9 & 51.0 & 0.00 & 84 & 40.9 & 40.9 & 51.0 & 0.00 & 84 & 40.9 & 40.9 & 51.0 & 0.00 \\
 & PCA & 30 & 40.7 & 40.8 & 50.9 & 0.01 & 30 & 40.7 & 40.8 & 50.9 & 0.01 & 30 & 40.7 & 40.8 & 50.9 & 0.01 \\
\cmidrule{2-17}
 & OPP & 155 & 20.4 & 16.3 & 5.6 & 3.66 & 83 & 19.6 & 15.4 & 4.9 & 3.09 & 48 & \textbf{43.3} & \underline{42.3} & \underline{40.9} & 1.67 \\
 & EFO & 155 & 20.4 & 16.3 & 5.6 & 3.23 & 83 & 19.6 & 15.4 & 4.9 & 2.69 & 48 & \textbf{43.3} & \underline{42.3} & \underline{40.9} & 1.50 \\
 & OPF & 97 & \textbf{21.0} & \textbf{16.9} & \textbf{6.2} & 3.08 & 71 & \textbf{22.4} & \textbf{18.5} & \textbf{7.7} & 2.52 & 31 & \underline{43.1} & \textbf{42.7} & \textbf{46.3} & 1.27 \\
 & OPR & 155 & 20.4 & 16.3 & 5.6 & 3.31 & 83 & 19.6 & 15.4 & 4.9 & 2.78 & 48 & \textbf{43.3} & \underline{42.3} & \underline{40.9} & 1.62 \\
 & MOPP & 144 & \underline{20.5} & \underline{16.4} & \underline{5.7} & 3.62 & 79 & 19.6 & 15.4 & 4.9 & 3.10 & 45 & 42.9 & 41.8 & 40.3 & 1.61 \\
 & OPST & 200 & \underline{20.5} & \underline{16.4} & \underline{5.7} & 3.12 & 107 & 20.3 & 16.2 & \underline{5.6} & 2.55 & 58 & \textbf{43.3} & \underline{42.3} & \underline{40.9} & 1.58 \\
 & SOPP & 155 & 20.4 & 16.3 & 5.6 & 2.84 & 83 & 19.6 & 15.4 & 4.9 & 2.37 & 48 & \textbf{43.3} & \underline{42.3} & \underline{40.9} & 1.45 \\
 & \CCG CT & \CCG 161 & \CCG 20.4 & \CCG 16.3 & \CCG 5.6 & \CCG 2.21 & \CCG 88 & \CCG \underline{20.4} & \CCG \underline{16.3} & \CCG \underline{5.6} & \CCG 1.80 & \CCG 48 & \CCG \textbf{43.3} & \CCG \underline{42.3} & \CCG \underline{40.9} & \CCG 1.45 \\
\midrule
\multirow{10}{*}{PigCVP} & Raw & 2,000 & 55.9 & 52.6 & 5.6 & 0.00 & 2,000 & 55.9 & 52.6 & 5.6 & 0.00 & 2,000 & 55.9 & 52.6 & 5.6 & 0.00 \\
 & PCA & 30 & 57.6 & 55.0 & 6.0 & 0.03 & 30 & 57.6 & 55.0 & 6.0 & 0.03 & 30 & 57.6 & 55.0 & 6.0 & 0.03 \\
\cmidrule{2-17}
 & OPP & 2,708 & 61.6 & 57.4 & 11.5 & 19.21 & 811 & 66.2 & 62.9 & 17.1 & 14.21 & 234 & \underline{68.7} & \underline{66.0} & \underline{20.8} & 6.91 \\
 & EFO & 2,708 & 61.6 & 57.4 & 11.5 & 16.97 & 811 & 66.2 & 62.9 & 17.1 & 13.40 & 234 & \underline{68.7} & \underline{66.0} & \underline{20.8} & 6.67 \\
 & OPF & 1,095 & \textbf{64.2} & \textbf{60.9} & \textbf{14.1} & 13.58 & 436 & \underline{66.8} & \underline{63.9} & \underline{17.6} & 8.36 & 129 & 66.1 & 63.7 & 16.2 & 5.14 \\
 & OPR & 2,636 & 61.4 & 57.1 & 11.4 & 17.22 & 534 & 66.7 & 63.4 & 17.1 & 13.54 & 216 & 67.6 & 65.0 & 18.7 & 6.63 \\
 & MOPP & 2,294 & 61.7 & 57.6 & 11.2 & 14.37 & 749 & 65.7 & 62.6 & 15.9 & 10.07 & 215 & 68.5 & 65.9 & 19.7 & 7.25 \\
 & OPST & 2,430 & 62.2 & 58.7 & \underline{12.0} & 15.76 & 770 & 66.1 & 63.3 & 16.6 & 12.68 & 216 & 67.4 & 64.9 & 18.5 & 8.52 \\
 & SOPP & 1,512 & \underline{62.8} & \underline{59.3} & \underline{12.0} & 15.32 & 301 & 66.7 & \textbf{64.0} & 17.2 & 12.70 & 209 & 68.0 & 65.6 & 19.5 & 5.73 \\
 & \CCG CT & \CCG 3,858 & \CCG 62.1 & \CCG 57.1 & \CCG 11.7 & \CCG 16.25 & \CCG 1,266 & \CCG \textbf{67.0} & \CCG 63.4 & \CCG \textbf{18.3} & \CCG 15.44 & \CCG 324 & \CCG \textbf{70.3} & \CCG \textbf{67.3} & \CCG \textbf{23.1} & \CCG 6.76 \\
\bottomrule[1.2pt]
\end{tabular}
\end{table}

\end{document}